\DeclareFontFamily{T1}{pzc}{}
\DeclareFontShape{T1}{pzc}{m}{it}{<-> [1.2] pzcmi8t}{}
\DeclareMathAlphabet{\mathpzc}{T1}{pzc}{m}{it}
\newtheorem{theorem}{Theorem}
\theoremstyle{definition}
\newtheorem{defn}{Definition}
\theoremstyle{remark}
\newtheorem{rem}{Remark}
\theoremstyle{assumption}
\newtheorem{assump}{Assumption}
\theoremstyle{fact}
\theoremstyle{claim}
\numberwithin{equation}{section}
\newcommand{\norm}[1]{\left\lVert{#1}\right\rVert}
\newcommand{\abs}[1]{\left\lvert{#1}\right\rvert}
\newcommand{\lra}{\longrightarrow}
\newcommand{\Let}{\coloneqq}
\newcommand{\teL}{\eqqcolon}
\newcommand{\pmat}[1]{\begin{pmatrix}#1\end{pmatrix}}
\renewcommand{\geq}{\geqslant}
\renewcommand{\leq}{\leqslant}
\renewcommand{\le}{\leqslant}
\newcommand{\R}{\mathbb{R}}
\newcommand{\N}{\mathbb{N}}
\renewcommand{\P}{\mathcal{P}}
\newcommand{\secref}[1]{\S\ref{#1}}
\newcommand{\KL}{\mathcal{KL}}
\newcommand{\Kinfty}{\mathcal{K}_{\infty}}
\newcommand{\Nstsigma}{\mathrm{N}(s,t)} 
\newcommand{\Tj}{\mathrm{T}_{j}}
\newcommand{\Tk}{\mathrm{T}_{k}}
\newcommand{\TS}{\mathrm{{T}^S}}
\newcommand{\TU}{\mathrm{{T}^{U}}}
\newcommand{\PS}{\mathcal{P}_{S}}
\newcommand{\PU}{\mathcal{P}_{U}}
\newcommand{\Ntsigma}{\mathrm{N}(0,t)}
\title{Stabilizing switching signals: a transition from point-wise to asymptotic conditions}
\author{Atreyee Kundu}
\author{Debasish Chatterjee}
\thanks{Atreyee is with the Robert Bosch Centre for Cyber-Physical Systems, Indian Institute of Science, Bangalore - 560012, India, email: atreyee@cps.iisc.ernet.in. Debasish is with the Systems \& Control Engineering, Indian Institute of Technology Bombay, Powai, Mumbai - 400076, India, email: dchatter@iitb.ac.in.}
\keywords{switched systems, stabilizing switching signals, asymptotic stability, multiple Lyapunov functions}
\date{\today}
\begin{document}

	\begin{abstract}
		Characterization of classes of switching signals that ensure stability of switched systems occupies a significant portion of the switched systems literature. This article collects a multitude of stabilizing switching signals under an umbrella framework. We achieve this in two steps: Firstly, given a family of systems, possibly containing unstable dynamics, we propose a new and general class of stabilizing switching signals. Secondly, we demonstrate that prior results based on both point-wise and asymptotic characterizations follow our result. This is the first attempt in the switched systems literature where these switching signals are unified under one banner. 
	\end{abstract}

    \maketitle


	\begin{multicols}{2}
	\mathtoolsset{mathic=true}
	\section{Introduction}
	\label{s:prob_state}
\subsection{The problem}
	\label{ss:prob_sett}
		A \emph{switched system} comprises of two components --- a family of systems and a switching signal. The \emph{switching signal} selects an \emph{active subsystem} at every instant of time, i.e., the system from the family that is currently being followed \cite[\S 1.1.2]{Liberzon}. Stability of switched systems is broadly classified into two categories --- \emph{stability under arbitrary switching} \cite[Chapter 2]{Liberzon} and \emph{stability under constrained switching} \cite[Chapter 3]{Liberzon}. In case of the former, conditions on the family of systems are determined such that the switched system generated under any admissible switching signal is stable; in case of the latter, given a family of systems, conditions on the switching signals are identified such that the resulting switched systems are stable. In this article we are interested in {identifying classes of stabilizing switching signals that ensure exponential convergence of switched systems in a sense to be made precise below.}

	We consider a family of continuous-time systems
        \begin{equation}
        \label{e:family}
            \dot{x}(t) = f_{i}(x(t)),\quad x(0) = x_{0},\quad i\in\P, \quad t\geq 0,
        \end{equation}
        where $x(t)\in\R^{d}$ is the vector of states at time $t$, and $\P = \{1,\ldots,N\}$ is an index set. We assume that for each $i\in\P$, $f_{i}:\R^{d}\lra\R^{d}$ is Lipschitz and $f_{i}(0) = 0$. Let $\sigma:[0,+\infty[\:\lra\P$ be a measurable function that specifies, at each time $t$, the index of the active system. The \emph{switched system} \cite[Chapter 1]{Liberzon} generated by the family of systems \eqref{e:family} and a fixed \emph{switching signal} $\sigma$ is given by
        \begin{equation}
        \label{e:swsys}
            \dot{x}(t) = f_{\sigma(t)}(x(t)),\quad x(0) = x_{0},\quad t\geq 0.
        \end{equation}
		Under the aforementioned assumptions on \(\sigma\), there exists \cite[Chapter 1]{Filippov} a Carath\'eodory solution of \eqref{e:swsys}. Let $0\teL \tau_{0}<\tau_{1}<\tau_{2}<\cdots$ denote the points of discontinuity of $\sigma$, henceforth called the \emph{switching instants}. Let $\Nstsigma$ denote the number of switches on an interval $]s,t]\subset[0,\infty[$. A switching signal is \emph{admissible} if it is piecewise constant as a function from $[0,+\infty[$ into $\P$, and by convention, is assumed to be continuous from {the} right and having limits from the left everywhere; we denote the set of admissible switching signals by $\mathcal{S}$.

        Given a family of systems \eqref{e:family}, we are interested in characterizing classes of switching signals \(\mathcal{S'}\subset\mathcal{S}\) such that for every $\sigma\in\mathcal{S'}$ the corresponding switched system \eqref{e:swsys} is globally asymptotically stable (GAS). Recall that:
        \begin{defn} \label{d:GAS}
            The switched system \eqref{e:swsys} is \emph{globally asymptotically stable} (GAS) for a given switching signal $\sigma$ if \eqref{e:swsys} is
            \begin{itemize}[label=\(\circ\), leftmargin=*]
                \item Lyapunov stable, and
                \item globally asymptotically convergent, i.e., irrespective of the initial condition $x_{0}$, we have $x(t)\to 0$ as $t\to+\infty$.
            \end{itemize}
           In other words, \eqref{e:swsys} is GAS for a given switching signal \(\sigma\) if there exists a class-\(\mathcal {KL}\) function \(\beta_\sigma\) such that \(\norm{x(t)} \le \beta_\sigma(\norm{x_0}, t)\) for all \(x_0\in\R^d\) and \(t\geq 0\).\footnote{Recall that
				$\mathcal{K} := \{\phi\::\:[0,+\infty[\to[0,+\infty[\:\:\big|\:\:\phi\:\:\text{ is continuous, strictly increasing},\:\:\phi(0) = 0\}$,
				$\KL := \{\phi\::\:[0,+\infty[^{2}\to[0,+\infty[\:\:\big|\:\:\phi(\cdot,s)\in\mathcal{K}\:\:\text{for each}\:\:s\:\:\text{and}\:\:\phi(r,\cdot)\searrow 0\:\:\text{as}\:\:s\nearrow+\infty\:\:\text{for each}\:\:r\}$,
				$\Kinfty := \{\phi\in\mathcal{K}\:\:\big|\:\:\phi(r)\rightarrow+\infty\:\:\text{as}\:\:r\rightarrow+\infty\}$.}
        \end{defn}

        \subsection{The basic assumptions}
	\label{the_assumps}

       	Let \(\P_{S}\) and \(\P_{U}\) denote the sets of indices of asymptotically stable and unstable systems in the family \eqref{e:family}, respectively, \(\P = \P_{S}\sqcup\P_{U}\). Let \(E(\P)\subset\P\times\P\) be the set of all ordered pairs \((i,j)\) such that the switching signal can jump from system \(i\) to system \(j\); in this case we say that the \emph{transition} \((i, j)\) is admissible.
	\begin{rem}
	\label{r:admissible_and_inadmissible_switches}
	\rm
		No distinction was made between admissible and inadmissible transitions, as we have defined above, in most of the classical works on switched systems. However, this distinction is becoming important in recent times; in particular, it plays a role in expressing situations where switches between certain subsystems may be prohibited. Such situations arise, for example, if it is known that switches from system $a$ to system $b$ are possible but not vice-versa, etc. In this article we employ a distinction between admissible and inadmissible transitions as described above, thereby allowing more descriptive specifications of switching signals --- clearly, ``unrestricted'' switching is a special case of restricted switching. In other words, if we construct a directed graph $G(\P,E(\P))$ in which the set of vertices $\P$ is the set of indices of the subsystems, and the set of directed edges $E(\P)$ defines the set of admissible transitions, the case of all transitions being admissible corresponds to the directed graph $G(\P,E(\P))$ being complete. 
	\end{rem}
	
	\begin{assump}
        \label{assump:key}
        There exist $\underline{\alpha}$, $\overline{\alpha}\in\mathcal{K}_{\infty}$, continuously differentiable functions $V_{i}:\R^{d}\:\lra [0,+\infty[$, $i\in\P$, and constants $\lambda_{i}\in\R$, $i\in\P$, such that
        \begin{align}
        \label{e:keyprop1}
            \underline{\alpha}(\norm{\xi})\leq V_{i}(\xi) \leq \overline{\alpha}(\norm{\xi})\quad \text{for all}\:\xi\in\R^{d},
        \end{align}
        and, with $\lambda_{i} > 0$ if \(i\in\P_{S}\) and $\lambda_{i} < 0$ if \(i\in\P_{U}\), we have for all $\gamma_{i}(0)\in\R^{d}$ and $t\in [0,+\infty[$,
        \begin{align}
            \label{e:Lyapprop}
            V_{i}(\gamma_{i}(t)) \leq V_{i}(\gamma_{i}(0))\exp(-\lambda_{i}t),
        \end{align}
        where $\gamma_{i}(\cdot)$ solves the $i$-th system dynamics in family \eqref{e:family}.
    \end{assump}

    	The functions \(V_{i}\)'s satisfying \eqref{e:keyprop1} and \eqref{e:Lyapprop} are called \emph{Lyapunov-like functions}, and they are standard in the literature, see e.g., \cite[Theorem 2]{Liberzon_IOSS}. The scalar $\lambda_{i}$ gives a quantitative measure of (in)stability of system $i\in\P$.

    \begin{assump}
        \label{assump:mu}
    For all $(i,j)\in E(\P)$, the respective Lyapunov-like functions are related as follows: there exists {$\mu_{ij} \geq 1$} such that
        \begin{equation}
        \label{e:muijprop}
            V_{j}(\xi)\leq\mu_{ij}V_{i}(\xi) \quad\text{for all}\:\: \xi\in\R^{d}.
        \end{equation}
    \end{assump}

    	The assumption of linearly comparable Lyapunov-like functions, i.e., there exists $\mu\geq1$ such that
        \begin{equation}
        \label{e:muprop}
            V_{j}(\xi)\leq\mu V_{i}(\xi)\quad\text{for all}\: \xi\in\R^{d}\:\text{and all}\: i,j\in \P,
        \end{equation}
        is standard in the theory of stability under average dwell time switching \cite[Theorem 3.2]{Liberzon}; \eqref{e:muijprop} gives more precise estimates than \eqref{e:muprop}.
	\subsection{A roughly chronological account of preceding works}
    	\label{ss:the_res}
		 Given a family of systems \eqref{e:family}, identification of classes of stabilizing switching signals primarily utilized the concept of ``slow switching'' vis-a-vis (\emph{average}) \emph{dwell time switching} \cite{Liberzon_survey, Antsaklis_survey}. Intuition suggests that a switched system whose constituent subsystems are all stable would itself be stable provided that the switching is ``slow''. Indeed, the basic idea of stability under slow switching is that if all the subsystems are stable and the switching is sufficiently slow, then the ``energy injected due to switching'' gets sufficient time for dissipation due to the stability of the individual subsystems. This idea is captured to some extent by the concepts of dwell time and average dwell time \cite[Chapter 3]{Liberzon}, \cite{Morse1996, HespanhaMorse, XieWenLi2001, chatterjee07}.

	 \subsubsection{(Average) Dwell time}
   	 \label{ss:dw_time}
	 	In the case of dwell time switching, a minimum duration of time is maintained between any two consecutive switching instants \cite[\S 3.2.1]{Liberzon}. Let us denote the \emph{i-th holding time} of a switching signal \(\sigma\) by
    \begin{align}
    \label{e:hold_time}
    	S_{i} := \tau_{i+1} - \tau_{i},\:\:i = 0,1,\ldots,
    \end{align}
    where \(\tau_{i}\) and \(\tau_{i+1}\) are two consecutive switching instants. A switching signal \(\sigma\) is said to satisfy a \emph{dwell time} \(\tau_{d} > 0\) if the inequality \(S_{i} \geq \tau_{d}\) is satisfied for all \(i = 0,1,\ldots\).

		Stabilizing dwell time switching was first proposed for switched linear systems in \cite[Lemma 2]{Morse1996}, and was later extended to the case of nonlinear systems in \cite{XieWenLi2001}; these signals are mostly of historical importance in the switched/hybrid systems community today.

	
		A more general class of switching signals, namely, those with an average dwell time \cite[\S3.2.2]{Liberzon}, allows the number of switches on any time interval to grow at most as an affine function of the length of the interval.
	The underlying idea is that stability of the switched system is preserved under fast switching, provided that the switches do not accumulate too quickly.
		
		A switching signal \(\sigma\) is said to satisfy an \emph{average dwell time} \(\tau_{a}\) if there exist \(\mathrm{N}_0, \tau_{a} > 0\) such that \(\displaystyle{\Nstsigma\leq \mathrm{N}_0 + \frac{t-s}{\tau_{a}}}\) for all \(]s,t]\subset[0,+\infty[\). The constant $\mathrm{N}_0$ is called a \emph{chatter bound}.
	
	Clearly, \(\sigma\) admits no switch if $\mathrm{N}_0$ is set to $0$, and if $\mathrm{N}_0 = 1$, a switching signal satisfying an average dwell time $\tau_{a}$ satisfies a dwell time $\tau_{d} = \tau_{a}$.

	\begin{theorem}[{\cite[Theorem 4]{HespanhaMorse}}]
	\label{t:avg_dw_time}
		Consider the family of systems \eqref{e:family} with $\PU = \emptyset$. Suppose that Assumption \ref{assump:key} holds with \(\abs{\lambda_{j}} = \lambda_{s}\) for all \(j\in\P_{S}\), and that Assumption \ref{assump:mu} holds with \(\mu_{ij} = \mu\) for all \((i,j)\in E(\P)\). 
Then the switched system \eqref{e:swsys} is GAS under every switching signal \(\sigma\in\mathcal{S}\) with an average dwell time
		\begin{align}
		\label{e:ADT_condn}
			\tau_{a} > \frac{\ln\mu}{\lambda_{s}}.
		\end{align}
	\end{theorem}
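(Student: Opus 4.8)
The plan is to follow the standard multiple-Lyapunov-function argument, tracking the value of the Lyapunov-like function attached to the currently active subsystem along a trajectory. Fix $\sigma\in\mathcal{S}$ with average dwell time $\tau_a > \frac{\ln\mu}{\lambda_s}$, fix $x_0\in\R^d$, and let $x(\cdot)$ solve \eqref{e:swsys} with switching instants $0 = \tau_0 < \tau_1 < \cdots$. Since $\PU = \emptyset$ and $\abs{\lambda_j} = \lambda_s$ for every $j\in\P_S$, each active subsystem has decay rate $\lambda_{\sigma(\tau_i)} = \lambda_s > 0$. First I would record the two one-step estimates on which everything rests: the per-interval decay from \eqref{e:Lyapprop},
\[
    V_{\sigma(\tau_i)}(x(t)) \le V_{\sigma(\tau_i)}(x(\tau_i))\exp\!\big(-\lambda_s(t-\tau_i)\big)\quad\text{for } t\in[\tau_i,\tau_{i+1}[,
\]
and the jump estimate at a switch, which follows from the right-continuity of $\sigma$ together with \eqref{e:muijprop} taken with $\mu_{ij}=\mu$:
\[
    V_{\sigma(\tau_{i+1})}(x(\tau_{i+1})) \le \mu\,V_{\sigma(\tau_i)}(x(\tau_{i+1})).
\]

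Next I would iterate these two facts backwards from $t$ across all $\Ntsigma$ switches on $]0,t]$. Applying the decay estimate on the final (possibly incomplete) interval $]\tau_{\Ntsigma},t]$ and then alternating the jump and decay estimates across $\tau_{\Ntsigma},\tau_{\Ntsigma-1},\ldots,\tau_1$, the common decay rate $\lambda_s$ lets all the interval contributions coalesce into a single factor $\exp(-\lambda_s t)$, while each of the $\Ntsigma$ switches contributes one factor $\mu$. This telescoping yields
\[
    V_{\sigma(t)}(x(t)) \le \mu^{\Ntsigma}\,V_{\sigma(0)}(x_0)\exp(-\lambda_s t).
\]
The average dwell time hypothesis now enters: writing $\mu^{\Ntsigma} = \exp(\Ntsigma\ln\mu)$ and using $\Ntsigma \le \mathrm{N}_0 + t/\tau_a$ gives
\[
    V_{\sigma(t)}(x(t)) \le \mu^{\mathrm{N}_0}\,V_{\sigma(0)}(x_0)\exp\!\Big(-\big(\lambda_s - \tfrac{\ln\mu}{\tau_a}\big)t\Big).
\]
The condition $\tau_a > \frac{\ln\mu}{\lambda_s}$ is exactly what makes the effective rate $\lambda_0 \Let \lambda_s - \frac{\ln\mu}{\tau_a}$ strictly positive.

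To conclude, I would sandwich the left-hand side using \eqref{e:keyprop1} to obtain $\underline{\alpha}(\norm{x(t)}) \le \mu^{\mathrm{N}_0}\overline{\alpha}(\norm{x_0})\exp(-\lambda_0 t)$, and then invert the class-$\Kinfty$ function $\underline{\alpha}$ to produce
\[
    \norm{x(t)} \le \underline{\alpha}^{-1}\!\Big(\mu^{\mathrm{N}_0}\,\overline{\alpha}(\norm{x_0})\,e^{-\lambda_0 t}\Big) \teL \beta_\sigma(\norm{x_0},t).
\]
One checks readily that $\beta_\sigma\in\KL$: for each fixed $t$ it is a composition of class-$\K$ functions in $\norm{x_0}$, and for each fixed $\norm{x_0}$ it decreases to $0$ as $t\to+\infty$; this is precisely GAS in the sense of Definition \ref{d:GAS}. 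The main obstacle is the bookkeeping in the telescoping step --- correctly interleaving the $\Ntsigma$ jump factors with the per-interval decays and treating the final incomplete interval uniformly. This is routine once the two one-step estimates are isolated; the genuinely load-bearing point is that a common decay rate collapses all interval contributions into one exponential, so that only the number of switches, controlled by $\tau_a$, decides whether decay outpaces the switching-induced growth.
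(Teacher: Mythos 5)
Your proof is correct, and it is the classical direct argument (essentially that of the cited Hespanha--Morse result): telescope the per-interval decay and the jump estimates to get \(V_{\sigma(t)}(x(t)) \le \mu^{\Ntsigma}\exp(-\lambda_s t)V_{\sigma(0)}(x_0)\), insert the point-wise bound \(\Ntsigma \le \mathrm{N}_0 + t/\tau_a\), and invert \(\underline{\alpha}\) to produce a \(\KL\) bound. The paper never proves Theorem \ref{t:avg_dw_time} this way: it is quoted as known, and within the paper it is recovered through the unifying framework --- in the proof of Theorem \ref{t:mainres2} it is reduced to Theorem \ref{t:mdavg_dw_time} (via \(\lambda_{s} = \min_{j}\abs{\lambda_{j}}\), \(\mu = \max_{j}\mu_{j}\), \(\mathrm{N}_{0} = \sum_{j}\mathrm{N}_{0_{j}}\), \(1/\tau_{a} = \sum_{j}1/\tau_{a}^{j}\)), whose hypotheses are then shown to imply the asymptotic condition \eqref{e:maincondn2}, so that GAS follows from Theorem \ref{t:mainres1}. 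Both routes rest on the same telescoping estimate (the paper's \eqref{e:pft6step1}--\eqref{e:psidefn}), but they diverge afterwards, and the divergence matters: you keep a bound valid for \emph{every} \(t \ge 0\), which yields uniform exponential decay and hence a \(\KL\) function directly; in particular your \(\beta_\sigma\) depends on \(\sigma\) only through \((\mathrm{N}_0,\tau_a)\), giving exactly the uniformity over the class of average-dwell-time signals that the paper itself notes is lost under asymptotic characterizations, and you need no separate Lyapunov-stability argument. The paper's route controls only \(\varlimsup_{t\to+\infty}\psi(t)/t\), so transients are unconstrained and the proof of Theorem \ref{t:mainres1} must add a Lipschitz-based verification of Lyapunov stability; what that buys is generality, since one argument simultaneously covers mode-dependent dwell times, unstable subsystems, and purely asymptotic conditions. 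So your proposal is a sound, more elementary proof of this particular theorem, with a quantitatively stronger (uniform) conclusion, while the paper's treatment sacrifices that uniformity to make the theorem a corollary of its umbrella result.
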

	Theorem \ref{t:avg_dw_time} has been widely employed in a diverse array of contexts within the switched systems literature {(\cite{chatterjee07, Liberzon_IOSS, Liberzon_fdrs})} and beyond \cite{Shim_adt_consensus, Tesi_TAC}. 
	At a first glance it may appear that the chatter bound \(\mathrm{N}_0\) provides an inexhaustible reserve of \(\mathrm{N}_0\) switches over every interval of time; indeed, the bound \(\Nstsigma \le \mathrm N_0 + \frac{t-s}{\tau_a}\) has to hold over every interval \(]s, t]\subset[0, +\infty[\). However, a closer inspection reveals that there is a reserve of only \(\mathrm{N}_0\) switches over the entire time axis \([0, +\infty[\) beyond the ones permissible for dwell time switching with \(\tau_d = \tau_a\).
\footnote{If by some time \(t' > 0\) we have \(\displaystyle{\mathrm{N}(0, t') = \mathrm{N_0} + \frac{t'}{\tau_a}}\), then for any \(k\in\{0, 1, \ldots\}\) and all \(s\in[k\tau_a, (k+1)\tau_a[\) we have \(\displaystyle{\mathrm{N}(0, t'+s) \le \mathrm{N_0} + \frac{(t'+s)}{\tau_a} = \mathrm{N_0} + \frac{t'}{\tau_a} + k}\).} The available number of reserve switches decreases every time that there is more than one switch on an interval of length \(\tau_a\). After these \(\mathrm{N}_0\) reserve switches are exhausted by fast switching, the average dwell time condition admits only one switch every \(\tau_a\) units of time --- i.e., it reduces to dwell time switching --- thereafter. Of course, there is no upper bound to \textsl{when} the reserve \(\mathrm{N}_0\) switches have to be exhausted.

	\subsubsection{Mode-dependent average dwell time}
	\label{ss:mdavg_dw_time}
		The class of stabilizing switching signals was further enlarged by the introduction of the concept of average dwell time specific to the subsystems in \cite{Zhao_TAC'12}. This variant of average dwell time is known as mode-dependent average dwell time.

		Let $\mathrm{N}_{{j}}(s,t)$ be the number of times a subsystem $j$ is activated on an interval $]s,t]\subset[0,+\infty[$, and $\mathrm{T}_{j}(s,t)$ denote the total duration of activation of the subsystem $j$ on $]s,t]$. In other words,
		 \begin{align}
		 \label{e:dur_defn}
		 	\mathrm{T}_{j}(s,t) = \abs{]s,t]\cap\biggl(\bigcup_{\displaystyle\substack{i=0\\\sigma(\tau_{i})=j}}^{+\infty}]\tau_{i},\tau_{i+1}]\biggr)},
		\end{align}
		where \(\abs{I}\) denotes the Lebesgue measure of a measurable set \(I\subset[0, +\infty[\). A switching signal $\sigma$ is said to satisfy a set of \emph{mode-dependent average dwell times} $\{\tau_{a}^{j}\}_{j\in\P}$ if there exist $\mathrm{N}_{0}^{j}$, $\tau_{{a}}^{j} > 0$ such that $\displaystyle{\mathrm{N}_{{j}}(s,t)\leq \mathrm{N}_{0}^{j} + \frac{\Tj(s,t)}{\tau_{a}^{j}}}$ for all $]s,t]\subset[0,+\infty[$, $j\in\P$. The constants $\{\mathrm{N}_{0}^{j}\}_{j\in\P}$ are called mode-dependent chatter bounds.

	\begin{theorem}[{\cite[Lemma 3]{Zhao_TAC'12}}]
	\label{t:mdavg_dw_time}
		Consider the family of systems \eqref{e:family} with $\PU = \emptyset$. Suppose that Assumption \ref{assump:key} holds, and Assumption \ref{assump:mu} holds with $\mu_{ij} = \mu_{j}$ for all $(i,j)\in E(\P)$. Then the switched system \eqref{e:swsys} is GAS for every switching signal $\sigma\in\mathcal{S}$ satisfying mode-dependent average dwell times
		\begin{align}
		\label{e:mdadtcondn1}
			\tau_{a}^{j} > \frac{\ln\mu_{j}}{\lambda_{j}},\quad j\in\P.
		\end{align}
	\end{theorem}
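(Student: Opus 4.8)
The plan is to track the Lyapunov-like function of the \emph{active} mode along the trajectory of \eqref{e:swsys} and to convert the mode-dependent average dwell time bounds into a uniform exponential decay estimate. Fix $\sigma\in\mathcal{S}$ satisfying \eqref{e:mdadtcondn1} and an initial condition $x_0$, and let $t>0$ carry $\Ntsigma$ switches on $]0,t]$ at instants $0=\tau_0<\tau_1<\cdots<\tau_{\Ntsigma}\le t$. On each interval $]\tau_i,\tau_{i+1}]$ the active subsystem is $\sigma(\tau_i)$, so \eqref{e:Lyapprop} gives $V_{\sigma(\tau_i)}(x(\tau_{i+1}))\le V_{\sigma(\tau_i)}(x(\tau_i))\exp(-\lambda_{\sigma(\tau_i)}S_i)$, while at each switching instant $\tau_{i+1}$ the jump condition \eqref{e:muijprop} (with $\mu_{ij}=\mu_j$) yields $V_{\sigma(\tau_{i+1})}(x(\tau_{i+1}))\le\mu_{\sigma(\tau_{i+1})}V_{\sigma(\tau_i)}(x(\tau_{i+1}))$. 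Chaining these two inequalities across all intervals up to time $t$ produces the telescoping estimate
\begin{align}
\label{e:proofbound}
V_{\sigma(t)}(x(t))\le V_{\sigma(0)}(x_0)\Biggl(\prod_{i=1}^{\Ntsigma}\mu_{\sigma(\tau_i)}\Biggr)\exp\Biggl(-\sum_{i=0}^{\Ntsigma-1}\lambda_{\sigma(\tau_i)}S_i-\lambda_{\sigma(\tau_{\Ntsigma})}(t-\tau_{\Ntsigma})\Biggr).
\end{align}

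Next I would regroup the right-hand side of \eqref{e:proofbound} by mode. Collecting the dwell contributions of each subsystem $j$ and recalling the definition \eqref{e:dur_defn} of $\mathrm{T}_j$, the exponent becomes $-\sum_{j\in\P}\lambda_j\,\mathrm{T}_j(0,t)$; likewise the product carries one factor $\mu_j$ per activation of $j$ on $]0,t]$, hence equals $\prod_{j\in\P}\mu_j^{\mathrm{N}_j(0,t)}$. Taking logarithms, \eqref{e:proofbound} reads
\begin{align*}
V_{\sigma(t)}(x(t))\le V_{\sigma(0)}(x_0)\exp\Biggl(\sum_{j\in\P}\bigl(\mathrm{N}_j(0,t)\ln\mu_j-\lambda_j\,\mathrm{T}_j(0,t)\bigr)\Biggr).
\end{align*}

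I would then exploit the mode-dependent average dwell time bound $\mathrm{N}_j(0,t)\le\mathrm{N}_0^j+\mathrm{T}_j(0,t)/\tau_a^j$ separately for each $j$. Since $\mu_j\ge1$ forces $\ln\mu_j\ge0$, multiplying the bound by $\ln\mu_j$ is order-preserving, so each summand is at most $\mathrm{N}_0^j\ln\mu_j+\mathrm{T}_j(0,t)\bigl(\tfrac{\ln\mu_j}{\tau_a^j}-\lambda_j\bigr)$. This is precisely where condition \eqref{e:mdadtcondn1} enters: $\tau_a^j>\ln\mu_j/\lambda_j$ is equivalent to $\lambda_j^{\ast}\coloneqq\lambda_j-\ln\mu_j/\tau_a^j>0$, turning each mode's net contribution into a genuine decay $-\lambda_j^{\ast}\,\mathrm{T}_j(0,t)$. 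Setting the constant $c\coloneqq\sum_{j\in\P}\mathrm{N}_0^j\ln\mu_j$ and the uniform rate $\lambda^{\ast}\coloneqq\min_{j\in\P}\lambda_j^{\ast}>0$, and using the partition identity $\sum_{j\in\P}\mathrm{T}_j(0,t)=t$, I obtain $V_{\sigma(t)}(x(t))\le e^{c}\,V_{\sigma(0)}(x_0)\,e^{-\lambda^{\ast}t}$.

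Finally I would sandwich this estimate with \eqref{e:keyprop1}: from $\underline\alpha(\norm{x(t)})\le V_{\sigma(t)}(x(t))$ and $V_{\sigma(0)}(x_0)\le\overline\alpha(\norm{x_0})$ it follows that $\norm{x(t)}\le\underline\alpha^{-1}\bigl(e^{c}\,\overline\alpha(\norm{x_0})\,e^{-\lambda^{\ast}t}\bigr)$, and the right-hand side defines a class-$\KL$ function of $(\norm{x_0},t)$, establishing GAS in the sense of Definition \ref{d:GAS}. The one genuinely delicate step is the regrouping in the second paragraph: verifying that the per-switch product and the interval-wise decay sum reorganize exactly into $\mathrm{N}_j(0,t)$ and $\mathrm{T}_j(0,t)$, while correctly accounting for the initial mode (which contributes no $\mu$-factor) and the final partial interval $]\tau_{\Ntsigma},t]$. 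Everything after that is a mode-by-mode inequality driven by \eqref{e:mdadtcondn1} together with the identity $\sum_{j\in\P}\mathrm{T}_j(0,t)=t$.
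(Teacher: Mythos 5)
Your proof is correct, but it is not the route this paper takes: here Theorem \ref{t:mdavg_dw_time} is treated as a consequence of the unifying result, Theorem \ref{t:mainres1}. In part (I) of the proof of Theorem \ref{t:mainres2}, the authors show that any $\sigma$ with mode-dependent average dwell times satisfying \eqref{e:mdadtcondn1} obeys the asymptotic condition \eqref{e:maincondn2} --- writing $\frac{1}{\tau_a^j} = \frac{\lambda_j}{\ln\mu_j} - \varepsilon_j$ with $\varepsilon_j > 0$, bounding $\nu(t)\sum_{(i,j)\in E(\P)}(\ln\mu_{ij})\rho_{ij}(t) - \sum_{j\in\P}\lambda_j\eta_j(t)$ above by $\sum_{j\in\P}(\ln\mu_j)\frac{\mathrm{N}_0^j}{t} - \sum_{j\in\P}\varepsilon_j(\ln\mu_j)\frac{\mathrm{T}_j(0,t)}{t}$, and passing to $\varlimsup$ --- and then invoke Theorem \ref{t:mainres1} to conclude GAS. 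Your argument is instead the classical direct one (essentially the original proof of Zhao et al.): the same iteration of \eqref{e:Lyapprop} and \eqref{e:muijprop}, but with per-mode bookkeeping kept point-wise in $t$ and the identity $\sum_{j\in\P}\mathrm{T}_j(0,t)=t$ applied \emph{before} any limit is taken, yielding $V_{\sigma(t)}(x(t))\le e^{c}V_{\sigma(0)}(x_0)e^{-\lambda^{\ast}t}$ with the explicit rate $\lambda^{\ast}=\min_{j\in\P}\bigl(\lambda_j-\ln\mu_j/\tau_a^j\bigr)>0$. This buys you strictly more than the paper's route: a class-$\KL$ bound depending only on the parameters $(\mathrm{N}_0^j,\tau_a^j)$, i.e., convergence that is uniform over the whole class of such switching signals --- precisely the uniformity that the paper notes Theorem \ref{t:mainres1} cannot provide. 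It also sidesteps a delicate point in the paper's limiting argument: there, strict negativity is concluded from $-\sum_{j\in\P}\varepsilon_j(\ln\mu_j)\varliminf_{t\to+\infty}\mathrm{T}_j(0,t)/t<0$, which tacitly requires some mode to have a positive asymptotic activation fraction (and $\mu_j>1$), whereas your point-wise use of $\sum_{j\in\P}\mathrm{T}_j(0,t)=t$ needs no such condition. The price is that your proof does not serve the paper's unification agenda of exhibiting mode-dependent average dwell time switching as a special case of the asymptotic framework. One minor wording correction: under the paper's convention that $\mathrm{N}_j(0,t)$ counts activations of mode $j$ (so the initial mode is counted), your product $\prod_{i=1}^{\Ntsigma}\mu_{\sigma(\tau_i)}$ is bounded above by $\prod_{j\in\P}\mu_j^{\mathrm{N}_j(0,t)}$ rather than equal to it; since $\mu_j\ge 1$ the inequality goes the right way, as you yourself flag in your closing paragraph.
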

	
	{Theorem \ref{t:avg_dw_time} follows as a special case of Theorem \ref{t:mdavg_dw_time}}, and we shall see in \secref{s:proofs} how this implication holds.
	\subsubsection{Average dwell time with unstable subsystems}
	\label{ss:avg_dw_time+u}	
		So far we presented classes of stabilizing switching signals that cater to the family \eqref{e:family} containing all asymptotically stable systems. In the presence of unstable systems in the family, the preceding results do not carry over in a straightforward fashion. Indeed, slow switching alone cannot guarantee stability of switched systems when not all subsystems are asymptotically stable --- additional conditions are essential to ensure that the switched system does not spend too much time activating the unstable components \cite{Antsaklis_survey}. In \cite{Liberzon_IOSS} input/output-to-state stability (IOSS) of continuous-time switched systems such that not all subsystems are IOSS, was studied.\footnote{Recall that if both the input and the output map are set to $0$ for all time, then the IOSS property reduces to the GAS property.} It was shown that the switched system is IOSS under a class of switching signals satisfying a certain average dwell time and constrained point-wise activation of unstable subsystems.
		
		Let $\TS(s,t)$ and $\TU(s,t)$ denote the total durations of activation of the stable and the unstable subsystems on an interval $]s,t]\subset[0,+\infty[$, respectively. Clearly,
		\begin{align*}
			\TS(s,t) & = \sum_{j\in\P_{S}}\Tj(s,t),\\
			\TU(s,t) & = \sum_{k\in\P_{U}}\Tk(s,t),\text{ and}\\
			t-s 	 & = \TS(s,t) + \TU(s,t).
		\end{align*}
	
	\begin{theorem}[{\cite[Theorem 2]{Liberzon_IOSS}}]
	\label{t:avg_dw_time+u}
		Consider the family of systems \eqref{e:family}. Suppose that Assumption \ref{assump:key} holds with \(\abs{\lambda_{j}} = \lambda_{s}\) for all \(j\in\P_{S}\) and  \ak{\(\abs{\lambda_{k}} = \lambda_{u}\) for all \(k\in\P_{U}\)}, and Assumption \ref{assump:mu} holds with \(\mu_{ij} = \mu\) for all \((i,j)\in E(\P)\). 
		\ak{Let there exist constants \(\mathrm{T}_0\geq 0\) and \(\rho\in[0,\frac{\lambda_s}{\lambda_s+\lambda_u}[\) such that the following holds:}
		\begin{align}
		\label{e:un_durn}
			\TU(s,t) \leq \mathrm{T}_{0}+\rho(t-s)\quad\text{for every}\: ]s,t]\subset[0,+\infty[.
		\end{align}
		Then the switched system \eqref{e:swsys} is GAS under every switching signal \(\sigma\in\mathcal{S}\) satisfying an average dwell time
		\begin{align}
		\label{e:ADT+u}
			\tau_{a} > \frac{\ln\mu}{\lambda_{s}(1-\rho)-\lambda_{u}\rho}.
		\end{align}
	\end{theorem}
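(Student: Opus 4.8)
The plan is to track the evolution of the active Lyapunov-like function $V_{\sigma(t)}(x(t))$ across the switching instants and to show that the growth injected at each switch (a factor of $\mu$) together with the growth produced while the unstable modes are active is ultimately dominated by the decay supplied by the stable modes. Throughout I would fix a switching signal $\sigma\in\mathcal{S}$ with an average dwell time $\tau_a$ satisfying \eqref{e:ADT+u}, fix $t>0$, and abbreviate $N\Let\Ntsigma$.

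First I would establish a telescoped bound. On each holding interval $]\tau_i,\tau_{i+1}]$ the active index is $\sigma(\tau_i)$, so \eqref{e:Lyapprop} gives $V_{\sigma(\tau_i)}(x(\tau_{i+1}))\leq V_{\sigma(\tau_i)}(x(\tau_i))\exp\bigl(-\lambda_{\sigma(\tau_i)}(\tau_{i+1}-\tau_i)\bigr)$, while Assumption \ref{assump:mu} with $\mu_{ij}=\mu$ bounds the jump of the Lyapunov-like function at $\tau_{i+1}$ by a factor $\mu$. Chaining these estimates from $0$ to $t$ (the final interval being $]\tau_N,t]$) produces
\[
	V_{\sigma(t)}(x(t))\leq \mu^{N}\exp\Bigl(-\lambda_s\TS(0,t)+\lambda_u\TU(0,t)\Bigr)V_{\sigma(0)}(x_0),
\]
where I have already grouped the durations by mode: since $\abs{\lambda_j}=\lambda_s$ on $\PS$ and $\abs{\lambda_k}=\lambda_u$ on $\PU$, the summed exponent collapses to $\lambda_s\TS(0,t)-\lambda_u\TU(0,t)$.

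Next I would eliminate $\TS$ via $\TS(0,t)=t-\TU(0,t)$, turning the exponent into $\lambda_s t-(\lambda_s+\lambda_u)\TU(0,t)$, and then invoke the standing hypothesis \eqref{e:un_durn} to replace $\TU(0,t)$ by its upper bound $\mathrm{T}_0+\rho t$. This step bounds the exponent below by $\bigl(\lambda_s(1-\rho)-\lambda_u\rho\bigr)t-(\lambda_s+\lambda_u)\mathrm{T}_0$; the hypothesis $\rho<\frac{\lambda_s}{\lambda_s+\lambda_u}$ is precisely what renders the coefficient $\lambda^{\ast}\Let\lambda_s(1-\rho)-\lambda_u\rho$ strictly positive. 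Bounding $\mu^{N}\leq\mu^{\mathrm{N}_0}\exp\bigl((\ln\mu)\,t/\tau_a\bigr)$ through the average dwell time and absorbing the two constant reserves into a single $t$-independent constant $C_0\Let\mu^{\mathrm{N}_0}\exp\bigl((\lambda_s+\lambda_u)\mathrm{T}_0\bigr)$, I obtain
\[
	V_{\sigma(t)}(x(t))\leq C_0\exp\Bigl(-\bigl(\lambda^{\ast}-\tfrac{\ln\mu}{\tau_a}\bigr)t\Bigr)V_{\sigma(0)}(x_0).
\]
Condition \eqref{e:ADT+u} makes $\zeta\Let\lambda^{\ast}-\frac{\ln\mu}{\tau_a}>0$, and sandwiching with \eqref{e:keyprop1} gives $\norm{x(t)}\leq\underline{\alpha}^{-1}\bigl(C_0 e^{-\zeta t}\,\overline{\alpha}(\norm{x_0})\bigr)$; verifying that the right-hand side is a class-$\KL$ function of $(\norm{x_0},t)$ then delivers GAS.

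I expect the main obstacle to lie not in any single estimate but in arranging that two independent ``budgets'' cooperate: the average-dwell-time bound tames the switching-induced growth $\mu^{N}$, whereas \eqref{e:un_durn} tames the unstable-activation-induced growth, and both constant reserves $\mathrm{N}_0$ and $\mathrm{T}_0$ must be shown to contribute only a multiplicative constant, leaving the genuinely linear-in-$t$ part of the exponent negative. The delicate point is the direction of the inequality when $\TU(0,t)$ is replaced by its upper bound, since $\TU$ enters with a negative sign; it is exactly the threshold $\rho<\frac{\lambda_s}{\lambda_s+\lambda_u}$ and the dwell-time bound \eqref{e:ADT+u} that must be engaged together to close the argument.
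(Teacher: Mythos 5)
Your proof is correct, and it is essentially the classical direct argument of the cited source --- but it is genuinely different from the route this paper takes, because the paper never proves Theorem \ref{t:avg_dw_time+u} by a self-contained estimate. Instead it recovers the theorem as a corollary of Theorem \ref{t:mainres1}: in part (II) of the proof of Theorem \ref{t:mainres2}, the hypotheses \eqref{e:un_durn} and \eqref{e:ADT+u} are pushed through \(\varlimsup\) calculus (\(\varlimsup_{t\to+\infty}\nu(t)\le 1/\tau_a\) in \eqref{e:pft2step3}, \(\sum_{(k,\ell)\in E(\P)}(\ln\mu_{k\ell})\rho_{k\ell}(t)=\ln\mu\) in \eqref{e:pft2step4}, and the activation-fraction bound \eqref{e:pft2step5}) so as to verify the asymptotic condition \eqref{e:maincondn2}; the telescoped Lyapunov iteration that you carry out with the special constants \((\mu,\lambda_s,\lambda_u)\) appears there only once, in general form, as \eqref{e:pft6step1}--\eqref{e:psidefn} inside the proof of Theorem \ref{t:mainres1}. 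The two routes share the same multiple-Lyapunov-function core but differ in what they retain: you keep the estimate pointwise, absorbing the reserves \(\mathrm{N}_0\) and \(\mathrm{T}_0\) into the constant \(C_0\) rather than discarding them in a limit, and so you obtain \(V_{\sigma(t)}(x(t))\le C_0\exp(-\zeta t)V_{\sigma(0)}(x_0)\) with an explicit rate \(\zeta=\lambda_s(1-\rho)-\lambda_u\rho-\frac{\ln\mu}{\tau_a}>0\) depending only on \((\mathrm{N}_0,\tau_a,\mathrm{T}_0,\rho,\mu,\lambda_s,\lambda_u)\). That buys you two things the paper's route does not give: uniformity of the class-\(\KL\) bound over all switching signals satisfying the hypotheses (precisely the uniformity the paper concedes its asymptotic framework loses), and Lyapunov stability for free from the \(\KL\) estimate, whereas the proof of Theorem \ref{t:mainres1} needs a separate transient argument (the global-Lipschitz bound, or the upper-envelope function \(\varphi\)) because \eqref{e:maincondn2} controls only asymptotics. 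What the paper's detour buys is its thesis: the signals of Theorem \ref{t:avg_dw_time+u} are exhibited as a subclass of those satisfying the single unifying criterion \eqref{e:maincondn2} --- and, as the paper stresses, one must verify \eqref{e:maincondn2} rather than the crisper \eqref{e:asympcondn2}, since the \(\varlimsup\)/\(\varliminf\) splitting inequalities go the wrong way for that purpose; your pointwise argument sidesteps this subtlety entirely.
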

	
		For every interval $]s,t]\subset[0,+\infty[$ of time, the condition \eqref{e:un_durn} constrains the point-wise activation of unstable subsystems, while the average dwell time condition restricts the number of switches. It is evident that ${\rho} < 1$. The stabilizing class of switching signals is identified in terms of a (strict) lower bound on the average dwell time expressed in \eqref{e:ADT+u}. {Theorem \ref{t:avg_dw_time} follows as a special case of Theorem \ref{t:avg_dw_time+u} when $\PU = \emptyset$.}

		\begin{rem}
			{Theorems \ref{t:avg_dw_time}-\ref{t:avg_dw_time+u} employ multiple Lyapunov-like functions, and cater to the case of exponential convergence of the mapping \(t\mapsto V_{\sigma(t)}(x(t))\). In fact, they employ identical proof-techniques, with the later results refining some estimates that were employed in the preceding ones.}
		\end{rem}

	\subsubsection{Asymptotic conditions}
        \label{ss:asymp_condn}
        While the preceding efforts at characterizing stabilizing switching signals are related to point-wise properties of such signals, a sharp transition away from the prevailing trend appeared in the recent article \cite{abc}. This work dealt with switched systems with unstable subsystems, and provided a characterization of a class of stabilizing switching signals \emph{entirely} in terms of certain asymptotic properties, namely, the asymptotic frequency of switching, the asymptotic fraction of activity of the constituent subsystems, and the asymptotic ``density'' of the admissible transitions among them.

		We now define the succinct notations necessary for the above mentioned ``asymptotic'' condition. 
		Fix $t > 0$.\footnote{The premise of \cite{abc} is a bit more general, and the properties of a switching signal there are measured with respect to a class $\Kinfty$ function $h:[0,+\infty[\to[0,+\infty[$. In this article we keep $h(t)=t$ for simplicity.}. Let
		\begin{align}
		\label{e:swfreq}
			\displaystyle{\nu(t) \Let \frac{\Ntsigma}{t}}
		\end{align}
		be the \emph{frequency of switching} at $t$. We denote by $\mathrm{N}_{ij}(0,t)$ the number of times a switch from subsystem $i$ to subsystem $j$ has occurred before (and including) time $t$. It follows that $\displaystyle{\Ntsigma = \sum_{(i,j)\in E(\P)}\mathrm{N}_{ij}(0,t)}$. Let
		\begin{align}
		\label{e:tranfreq}
			 \rho_{ij}(t) := \frac{\mathrm{N}_{ij}(0,t)}{\Ntsigma}
		\end{align}
		be the \emph{transition frequency from subsystem $i$} to subsystem $j$ on $]0,t]$, $(i,j)\in E(\P)$.
		We let
		\begin{align}
		\label{e:actifrac}
			\eta_{j}(t) := \frac{\mathrm{T}_{j}(0,t)}{t}
		\end{align}
		denote the \emph{fraction of activation of subsystem $j$} on the interval $]0,t]$.
			
		\begin{theorem}[{\cite[Theorem 5]{abc}}]
		\label{t:asym_condn}
			Consider the switched system \eqref{e:swsys}. Let Assumptions \ref{assump:key} and \ref{assump:mu} hold. Then the switched system \eqref{e:swsys} is GAS under every switching signal $\sigma\in\mathcal{S}$ satisfying
			\begin{equation}
			\label{e:asympcondn2}
			\begin{aligned}
				& \varlimsup_{t\to+\infty}\nu(t)\sum_{(k,\ell)\in E(\P)}(\ln\mu_{k\ell})\varlimsup_{t\to+\infty}\rho_{k\ell}(t)\\
				& \qquad < \sum_{j\in\PS}\abs{\lambda_{j}}\varliminf_{t\to+\infty}\eta_{j}(t) - \sum_{k\in\PU}\abs{\lambda_{k}}\varlimsup_{t\to+\infty}\eta_{k}(t).
			\end{aligned}
			\end{equation}
		\end{theorem}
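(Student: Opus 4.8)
The plan is to follow the trajectory of \eqref{e:swsys} through the composite Lyapunov-like function $t\mapsto V_{\sigma(t)}(x(t))$ and to distil from it an exponential decay whose rate is exactly the gap in \eqref{e:asympcondn2}. First I would fix $t>0$ and let $0=\tau_{0}<\tau_{1}<\cdots<\tau_{N}\le t$, $N=\Ntsigma$, be the switching instants in $]0,t]$. On each interval $]\tau_{i},\tau_{i+1}]$ the active mode is $\sigma(\tau_{i})$, so \eqref{e:Lyapprop} gives $V_{\sigma(\tau_{i})}(x(\tau_{i+1}))\le V_{\sigma(\tau_{i})}(x(\tau_{i}))\exp(-\lambda_{\sigma(\tau_{i})}S_{i})$, while at each instant $\tau_{i+1}$ the transition estimate \eqref{e:muijprop} gives $V_{\sigma(\tau_{i+1})}(x(\tau_{i+1}))\le\mu_{\sigma(\tau_{i})\sigma(\tau_{i+1})}V_{\sigma(\tau_{i})}(x(\tau_{i+1}))$. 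Chaining these alternately from $0$ to $t$, collecting the $\mu$-factors by transition type and the $\lambda$-exponents by total mode-activation time, yields the finite-$t$ bound
\[
V_{\sigma(t)}(x(t))\le V_{\sigma(0)}(x_{0})\exp\bigl(\Phi(t)\bigr),\qquad \Phi(t):=\sum_{(k,\ell)\in E(\P)}\mathrm{N}_{k\ell}(0,t)\ln\mu_{k\ell}-\sum_{j\in\P}\lambda_{j}\mathrm{T}_{j}(0,t).
\]

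Next I would split $\P=\PS\sqcup\PU$, use $\lambda_{j}=\abs{\lambda_{j}}$ for $j\in\PS$ and $\lambda_{k}=-\abs{\lambda_{k}}$ for $k\in\PU$, take logarithms, and divide by $t$. Writing $\mathrm{N}_{k\ell}(0,t)/t=\nu(t)\rho_{k\ell}(t)$ and $\mathrm{T}_{j}(0,t)/t=\eta_{j}(t)$, this reads
\[
\frac{1}{t}\ln V_{\sigma(t)}(x(t))\le\frac{\ln V_{\sigma(0)}(x_{0})}{t}+\nu(t)\!\!\sum_{(k,\ell)\in E(\P)}\!\!\rho_{k\ell}(t)\ln\mu_{k\ell}-\sum_{j\in\PS}\abs{\lambda_{j}}\eta_{j}(t)+\sum_{k\in\PU}\abs{\lambda_{k}}\eta_{k}(t).
\]
Passing to $\varlimsup_{t\to+\infty}$, the first term vanishes; for the rest I would use subadditivity of $\varlimsup$, the identity $\varlimsup(-a)=-\varliminf a$, superadditivity of $\varliminf$, and the product bound $\varlimsup\,\nu\rho_{k\ell}\le(\varlimsup\nu)(\varlimsup\rho_{k\ell})$ — legitimate because $\ln\mu_{k\ell}\ge0$, the factors $\rho_{k\ell},\eta_{j}$ lie in $[0,1]$, and $\varlimsup_{t\to+\infty}\nu(t)<\infty$ (which finiteness of the left side of \eqref{e:asympcondn2} forces). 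The resulting upper bound for $\varlimsup_{t\to+\infty}\Phi(t)/t$ is exactly the left side of \eqref{e:asympcondn2} minus its right side, hence strictly negative by hypothesis; thus there are $c>0$ and $T\ge0$ with $V_{\sigma(t)}(x(t))\le\exp(-ct)$ for $t\ge T$, and \eqref{e:keyprop1} gives $\norm{x(t)}\le\underline{\alpha}^{-1}(\exp(-ct))\to0$.

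The step I expect to be the main obstacle is upgrading this purely asymptotic decay into the full GAS statement of Definition \ref{d:GAS}, namely a single class-$\KL$ bound valid for \emph{all} $t\ge0$ and $x_{0}$. The asymptotic estimate says nothing uniform about the transient, where the unstable modes and the factors $\mu_{k\ell}\ge1$ may drive $\Phi(t)$ positive, and the constants $c,T$ are extracted from a $\varlimsup$. The resolution I would pursue exploits that $\varlimsup_{t\to+\infty}\Phi(t)/t<0$ already forces $\Phi(t)\to-\infty$; since an admissible $\sigma$ has finitely many switches on every compact interval, $\Phi$ is bounded above on compacts, so $M_{\sigma}:=\sup_{t\ge0}\Phi(t)<\infty$. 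Then the finite-$t$ bound gives $V_{\sigma(t)}(x(t))\le\overline{\alpha}(\norm{x_{0}})\exp(M_{\sigma})$ for all $t$, which via \eqref{e:keyprop1} yields uniform boundedness and Lyapunov stability; fusing this with the eventual exponential decay manufactures the required $\KL$ bound $\beta_{\sigma}$. The delicate accounting is that Definition \ref{d:GAS} permits $\beta_{\sigma}$ to depend on $\sigma$, so the signal-dependent constant $M_{\sigma}$ is admissible — one only has to check that the prefactor can be dominated by a class-$\K$ function of $\norm{x_{0}}$, which \eqref{e:keyprop1} supplies.
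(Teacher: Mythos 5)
Your proposal is correct, and its analytic core is the same as the paper's: in this paper Theorem \ref{t:asym_condn} is recovered by combining Theorem \ref{t:mainres1} with part (III) of the proof of Theorem \ref{t:mainres2}, and your $\Phi(t)$ is exactly the paper's $\psi(t)$ in \eqref{e:psidefn}, your division by $t$ is \eqref{e:pft6step4}, and your $\varlimsup$/$\varliminf$ manipulation (subadditivity, $\varlimsup(-a)=-\varliminf a$, and the product bound for nonnegative factors) passing from \eqref{e:asympcondn2} to the strict negativity of $\varlimsup_{t\to+\infty}\psi(t)/t$ is precisely \eqref{e:pft2step6} combined with \eqref{e:verd1}. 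Where you genuinely depart from the paper is the transient/Lyapunov-stability step. The paper controls the transient on the compact interval $[0,T(1,\varepsilon)]$ through the vector fields themselves: a uniform Lipschitz constant yields $\norm{x(t)}\leq \exp(Lt)\norm{x_0}$ in the globally Lipschitz case, and in the locally Lipschitz case it builds a piecewise-linear continuous upper envelope $\varphi$ of $t\mapsto V_{\sigma(t)}(x(t))$ whose maximum tends to $0$ as $\norm{x_0}\to 0$. You instead observe that $\varlimsup_{t\to+\infty}\Phi(t)/t<0$ forces $\Phi(t)\to-\infty$, while non-Zenoness of admissible signals keeps $\Phi$ bounded above on compacts, so $M_\sigma:=\sup_{t\geq 0}\Phi(t)<+\infty$; then \eqref{e:keyprop1} gives $\norm{x(t)}\leq \underline{\alpha}^{-1}\bigl(\exp(M_\sigma)\,\overline{\alpha}(\norm{x_0})\bigr)$ for all $t\geq 0$, and Lyapunov stability follows with no Lipschitz machinery at all, exploiting that $\Phi$ depends only on $\sigma$ and not on $x_0$. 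Your variant is shorter and purely Lyapunov-theoretic; the paper's variant is more modular, since its step (2) establishes Lyapunov stability for \emph{any} switching signal ensuring uniform global asymptotic convergence, independently of how that convergence was obtained. Two small blemishes in your write-up, neither a gap: the eventual decay estimate must retain the prefactor, i.e., $V_{\sigma(t)}(x(t))\leq V_{\sigma(0)}(x_0)\exp(-ct)$ for $t\geq T$; and your appeal to ``finiteness of the left side of \eqref{e:asympcondn2} forces $\varlimsup_{t\to+\infty}\nu(t)<\infty$'' is valid only when $\sum_{(k,\ell)\in E(\P)}(\ln\mu_{k\ell})\varlimsup_{t\to+\infty}\rho_{k\ell}(t)>0$ --- in the degenerate case where this sum vanishes the product bound needs a separate (trivial or conventional) treatment, an edge case that the paper's own \eqref{e:pft2step6} glosses over in exactly the same way.
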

		
		Observe some of the key differences between Theorems \ref{t:avg_dw_time} and \ref{t:mdavg_dw_time} and Theorem \ref{t:asym_condn}: the first two relied on point-wise conditions on the number of switches on every interval of time, but the last utilizes only certain asymptotic properties of the switching signal. 
		The term on the left-hand side of \eqref{e:asympcondn2} is a product of the upper asymptotic density of the switching frequency $\nu$ and the factor $\displaystyle{\sum_{(k,\ell)\in E(\P)}(\ln\mu_{k\ell})\varlimsup_{t\to+\infty}{\rho}_{k\ell}}$, which contains the asymptotic upper density of ${\rho}_{k\ell}$, the frequency of admissible transitions among the systems in the given family \eqref{e:family}. The two terms on the right-hand side of \eqref{e:asympcondn2} involve the switching destinations. The first (\emph{resp}.\ second) term comprises of the lower (\emph{resp}.\ upper) asymptotic density of the total fraction of activation of the asymptotically stable (\emph{resp}.\ unstable) systems in \eqref{e:family}, weighted by the corresponding quantitative measures of (in)stability.
		
		The condition \eqref{e:asympcondn2} allows $\Ntsigma$ to grow faster than an affine function of $t$; indeed, \(\sigma\)'s with $\Ntsigma$ satisfying $k_{0}t - k_0'\sqrt{t} \leq\Ntsigma \le k_{1}+k'_{1}t+k''_{1}\sqrt{t}$ for positive constants $k_{0}$, $k'_{0}$, $k_{1}$, $k'_{1}$, \(k''_1\), are admissible. However, Theorem \ref{t:asym_condn} does not guarantee uniform stability in the sense of \cite[\S2]{Liberzon}. This inherent deficiency is, of course, only natural since \eqref{e:asympcondn2} does neither consider nor constrain the transient behaviour of the switching signals.
		
		\begin{rem}
			Observe that in \cite[Theorem 5]{abc} there is an additional condition \(\displaystyle{\varliminf_{t\to+\inf}\nu(t) > 0}\). However, this condition turns out to be superfluous, see Appendix for a detailed discussion on this matter.
		\end{rem}

		\begin{rem}
			{
				A glance at the proof of Theorem \ref{t:asym_condn} given in \cite{abc} reveals that this result also caters to the case of exponential convergence of the function \(t\mapsto V_{\sigma(t)}(x(t))\), much like the preceding Theorems \ref{t:avg_dw_time}-\ref{t:avg_dw_time+u}. However, it is interesting to note that the assertion of Theorem \ref{t:avg_dw_time+u} does not follow from Theorem \ref{t:asym_condn} when Theorem \ref{t:asym_condn} is specialized to the case of switching signals satisfying the conditions of Theorem \ref{t:avg_dw_time+u}. {For instance, consider a switched linear system \(\dot{x}(t) = A_{\sigma(t)}x(t)\), \(x(0) = x_0\), \(t\geq 0\). Let \(\P = \{1,2,3\}\) with \(A_1 = \pmat{-0.3 & 1\\-0.9 & -1.2}\), \(A_2 = \pmat{0.2 & 0.1\\0.3 & 0}\) and \(A_3 = \pmat{0.1 & 0.2\\0.3 & 0.1}\). Clearly, we have \(\P_S = \{1\}\) and \(\P_U = \{2,3\}\). Let \(E(\P) = \{(1,2),(1,3),(2,1),(2,3),(3,1),(3,2)\}\). We compute \(\lambda_j\), \(j\in\P\) and \(\mu_{k\ell}\), \((k,\ell)\in E(\P)\) from the estimates provided in \cite{def} and obtain: \(\lambda_1 = 0.9389\), \(\lambda_2 = -0.7301\), \(\lambda_3 = -0.7206\), \(\mu_{12} = \mu_{13} = 2.0611\), \(\mu_{21} = \mu_{31} = 1.0651\), \(\mu_{23} = \mu_{32} = 1\). We choose \(\lambda_s = \lambda_1=0.9389\), \(\lambda_u = \max{\{\lambda_2,\lambda_3\}} = 0.7301\) and \(\displaystyle{\mu = \max_{(i,j)\in E(\P)}\mu_{k\ell} = 2.0611}\). Now, consider a switching signal \(\sigma\) that satisfies Theorem \ref{t:avg_dw_time+u} with \(\mathrm{N}_0 = 2\), \(\mathrm{T}_0 = 0.3\), \(\rho = 0.55\), \(\tau_a = 6.93\). Let \(\mathrm{T}_2(0,t) = 0.25t\), \(\mathrm{T}_3(0,t) = 0.3t\), and \(\mathrm{N}_{k\ell}(0,t) = \frac{1}{6}\mathrm{N}(0,t)\) for all \((k,\ell)\in E(\P)\). Clearly, for the above \(\sigma\), \(\displaystyle{\varlimsup_{t\to+\infty}\nu(t)\sum_{(k,\ell)\in E(\P)}(\ln\mu_{k\ell})\varlimsup_{t\to+\infty}\rho_{k\ell}(t) - \sum_{j\in\P_S}\abs{\lambda_j}\varliminf_{t\to+\infty}\eta_j(t)}\) \(\displaystyle{+ \sum_{k\in\P_U}\abs{\lambda_k}\varliminf_{t\to+\infty}\eta_k(t)} = 0.0843\). Hence, \eqref{e:asympcondn2} is not satisfied. In view of the above example, the quest for a unifying framework capturing Theorems \ref{t:avg_dw_time} - \ref{t:asym_condn} is, therefore, only natural, and we establish such a framework in \secref{s:mainres}.}
			}
		\end{rem}
		
\subsection{Our contributions}
    	\label{ss:our_obj}
   		We have so far collected, in a roughly chronological order of appearance, various classes of stabilizing switching signals for continuous-time switched systems. The corresponding stability conditions are derived with the aid of multiple Lyapunov-like functions \cite[Chapter 3]{Liberzon}, and provide only ``sufficient'' conditions. In fact, the proof techniques of all the above results are essentially similar modulo minor differences. The switching signals in Theorems \ref{t:avg_dw_time}--\ref{t:avg_dw_time+u} are characterized based on their point-wise properties, while the characterization in Theorem \ref{t:asym_condn} relies solely on their asymptotic behaviour.  Theorems \ref{t:avg_dw_time+u}--{\ref{t:asym_condn}} cater to families in which not all systems are asymptotically stable, while Theorems \ref{t:avg_dw_time}--\ref{t:mdavg_dw_time} apply to families in which all systems are asymptotically stable. On the one hand, given a family of systems, numerically constructing a switching signal that satisfies certain conditions on every interval of time is a difficult task. On the other hand, stabilizing switching signals characterized on the basis of asymptotic behaviour of the switching signals afford a relatively simpler algorithmic synthesis, but fail to guarantee ``uniformity'' properties unlike the ones that satisfy point-wise conditions.\footnote{By ``uniformity'', here we mean uniformity over a class of switching signals satisfying certain conditions. To wit, suppose that there are two switching signals $\sigma_{1}$ and $\sigma_{2}$ that satisfy the conditions in Theorem \ref{t:asym_condn}. We have that under both $\sigma_{1}$ and $\sigma_{2}$, the switched system \eqref{e:swsys} is GAS with the corresponding class $\mathcal{KL}$ functions being $\beta_{\sigma_1}$ and $\beta_{\sigma_2}$, respectively. However, Theorem \ref{t:asym_condn} does not guarantee that $\beta_{\sigma_1} = \beta_{\sigma_2}$.}

		In the next section we propose a general framework that unifies all the preceding classes of stabilizing switching signals under one banner. We achieve this in two steps: Given a family of systems, in the first step, we identify a general class of stabilizing switching signals in Theorem \ref{t:mainres1}. Multiple Lyapunov-like functions are employed in our analysis, and the proposed class is characterized solely in terms of certain asymptotic quantities. In the second step (Theorem \ref{t:mainres2}), we show that all the classes of stabilizing switching signals that we have described above are unified by the one that we described in Theorem \ref{t:mainres1}. At this point it is important to clarify what we mean by ``unify'': we show that if a switching signal $\sigma$ satisfies the conditions in Theorem \ref{t:avg_dw_time} (resp.\ Theorems \ref{t:mdavg_dw_time}, \ref{t:avg_dw_time+u}, \ref{t:asym_condn}), then the conditions in Theorem \ref{t:mainres1} follow, and hence, by the assertion of Theorem \ref{t:mainres1}, the switched system \eqref{e:swsys} is GAS. Thus, we unify a large class of stabilizing switching signals under one umbrella framework.

\section{A unifying framework}
	\label{s:mainres}
		The first result of this article, Theorem \ref{t:mainres1} below, characterizes a broad class of stabilizing switching signals:
		\begin{theorem}
		\label{t:mainres1}
		{Consider the family of systems \eqref{e:family}. Let Assumptions \ref{assump:key} and \ref{assump:mu} hold. Then the switched system \eqref{e:swsys} is GAS for every switching signal \(\sigma\in\mathcal{S}\) that satisfies
		\begin{align}
		\label{e:maincondn2}
				\varlimsup_{t\to+\infty}\Biggl(\nu(t)\sum_{(k,\ell)\in E(\P)}(\ln\mu_{k\ell})\rho_{k\ell}(t) &- \sum_{j\in\PS}\abs{\lambda_{j}}\eta_{j}(t) \nonumber\\&\hspace*{-1.5cm}+ \sum_{k\in\PU}\abs{\lambda_{k}}\eta_{k}(t)\Biggr) <  0,
		\end{align}
    where $\lambda_{j}$, $j\in\P_{S}$, $\lambda_{k}$, $k\in\P_{U}$ and $\mu_{k\ell}$, $(k,\ell)\in E(\P)$ obey \eqref{e:Lyapprop} and \eqref{e:muijprop}, respectively, and $\nu(t)$, $\rho_{k\ell}(t)$, $(k,\ell)\in E(\P)$ and $\eta_{j}(t)$, $j\in\P_{S}$, $\eta_{k}(t)$, $k\in\P_{U}$ are as defined in \eqref{e:swfreq}, \eqref{e:tranfreq} and \eqref{e:actifrac}, respectively.}
	\end{theorem}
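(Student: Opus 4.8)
The plan is to track the evolution of the ``energy'' $t \mapsto V_{\sigma(t)}(x(t))$ along the switched trajectory and to convert the resulting bound entirely into the asymptotic descriptors appearing in \eqref{e:maincondn2}. First I would fix $\sigma\in\mathcal{S}$ and $x_0$, write the switching instants as $0 = \tau_0 < \tau_1 < \cdots$, and combine the two available inequalities: on each interval $[\tau_i,\tau_{i+1}[$ the active subsystem $\sigma(\tau_i)$ obeys \eqref{e:Lyapprop}, so $V_{\sigma(\tau_i)}(x(t)) \le V_{\sigma(\tau_i)}(x(\tau_i))\exp(-\lambda_{\sigma(\tau_i)}(t-\tau_i))$, while at each switching instant $\tau_{i+1}$ the jump is controlled by \eqref{e:muijprop}, so $V_{\sigma(\tau_{i+1})}(x(\tau_{i+1})) \le \mu_{\sigma(\tau_i)\sigma(\tau_{i+1})} V_{\sigma(\tau_i)}(x(\tau_{i+1}))$. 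Telescoping these across every switch on $]0,t]$ yields
\[
V_{\sigma(t)}(x(t)) \le V_{\sigma(0)}(x_0)\,\prod_{(k,\ell)\in E(\P)}\mu_{k\ell}^{\mathrm{N}_{k\ell}(0,t)}\,\exp\bigl(-\sum_{j\in\P}\lambda_{j}\mathrm{T}_{j}(0,t)\bigr),
\]
since the accumulated dwell in mode $j$ on $]0,t]$ is exactly $\mathrm{T}_{j}(0,t)$ and the factor $\mu_{k\ell}$ occurs once per $k\to\ell$ transition.

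Next I would take logarithms and split $\P = \PS\sqcup\PU$, using $\lambda_{j}=\abs{\lambda_{j}}>0$ for $j\in\PS$ and $\lambda_{k}=-\abs{\lambda_{k}}<0$ for $k\in\PU$. Multiplying and dividing by $t$ and substituting the definitions \eqref{e:swfreq}, \eqref{e:tranfreq}, \eqref{e:actifrac}, i.e.\ $\mathrm{N}_{k\ell}(0,t)=t\,\nu(t)\rho_{k\ell}(t)$ and $\mathrm{T}_{j}(0,t)=t\,\eta_{j}(t)$, collapses the estimate to $V_{\sigma(t)}(x(t)) \le V_{\sigma(0)}(x_0)\exp(t\,\Phi(t))$, where $\Phi(t)$ is precisely the bracketed expression whose $\ls$ is required to be negative in \eqref{e:maincondn2}. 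This is the crux: it recasts a purely trajectory-based bound in terms of the switching signal's asymptotic frequency, transition densities and activation fractions.

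Then the hypothesis $\ls_{t\to+\infty}\Phi(t) < 0$ supplies constants $c>0$ and $T_*>0$ with $\Phi(t)\le -c$ for all $t\ge T_*$, whence $t\,\Phi(t)\le -ct$ in that regime. To upgrade this to genuine GAS I would bound $t\,\Phi(t)$ on the transient window $[0,T_*]$: an admissible $\sigma$ has finitely many switches on any bounded interval, so $\sum_{(k,\ell)}\mathrm{N}_{k\ell}(0,t)\ln\mu_{k\ell} + \sum_{k\in\PU}\abs{\lambda_{k}}\mathrm{T}_{k}(0,t)$ is bounded on $[0,T_*]$ by a constant $M=M(\sigma,T_*)$, and discarding the nonnegative stable-dissipation term gives $t\,\Phi(t)\le M$ there. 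Combining the two regimes yields $t\,\Phi(t)\le M+cT_*-ct$ for every $t\ge0$. Finally, sandwiching with \eqref{e:keyprop1} gives $\underline{\alpha}(\norm{x(t)})\le V_{\sigma(t)}(x(t))\le \overline{\alpha}(\norm{x_0})\exp(M+cT_*)\exp(-ct)$, so $\norm{x(t)}\le\beta_{\sigma}(\norm{x_0},t)$ with $\beta_{\sigma}(r,s)\Let\underline{\alpha}^{-1}\bigl(\overline{\alpha}(r)\exp(M+cT_*)\exp(-cs)\bigr)$; I would then check $\beta_{\sigma}\in\KL$ (class-$\K$ in $r$ because $\overline{\alpha}\in\Kinfty$ and $\underline{\alpha}^{-1}$ is increasing, and decreasing to $0$ in $s$), which is exactly Definition \ref{d:GAS}.

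The main obstacle I anticipate is the last step rather than the telescoping algebra. Condition \eqref{e:maincondn2} constrains only the tail $t\to+\infty$, so on its own it delivers convergence but not Lyapunov stability; producing a single class-$\KL$ bound valid for \emph{all} $t\ge0$ forces one to control the transient regime separately. This is where the (non-Zeno) admissibility of $\sigma$ — guaranteeing the $\mu$-product stays finite on compact intervals — is essential, and the resulting $\beta_{\sigma}$ must legitimately be allowed to depend on $\sigma$, reflecting the non-uniformity already noted after Theorem \ref{t:asym_condn}.
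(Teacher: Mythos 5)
Your proof is correct, and its first half --- telescoping \eqref{e:Lyapprop} and \eqref{e:muijprop} across the switches and rewriting the exponent as \(\psi(t)=t\,\Phi(t)\), with \(\Phi\) the bracketed quantity in \eqref{e:maincondn2} --- coincides exactly with the paper's derivation of \eqref{e:pft6step1}--\eqref{e:pft6step4}. Where you genuinely diverge is in how GAS is extracted from \(\varlimsup_{t\to+\infty}\Phi(t)<0\). The paper decomposes the task: it first observes that \eqref{e:maincondn2} forces \(\exp(\psi(t))\to 0\) (global asymptotic convergence, uniform over initial conditions \(\tilde{x}_0\) with \(\norm{\tilde{x}_0}\le\norm{x_0}\)), and then verifies Lyapunov stability \emph{separately}, invoking the Lipschitz property of the family: for globally Lipschitz \(f_i\) it uses the growth bound \(\norm{x(t)}\le\exp(Lt)\norm{x_0}\) on the transient window \([0,T(1,\varepsilon)]\), and for locally Lipschitz \(f_i\) it constructs a piecewise-linear continuous upper envelope \(\varphi\) of \(t\mapsto V_{\sigma(t)}(x(t))\) and lets \(\norm{x_0}\to 0\). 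You instead produce, in one stroke, an explicit class-\(\KL\) bound: choose \(T_*\) with \(\Phi(t)\le -c\) for \(t\ge T_*\), bound \(\psi(t)\le M\) on \([0,T_*]\) (legitimate, since admissible signals are non-Zeno and \(\mathrm{N}_{k\ell}(0,\cdot)\), \(\Tk(0,\cdot)\) are nondecreasing, so \(M\) depends only on \(\sigma\) and \(T_*\), not on \(x_0\)), merge the regimes into \(\psi(t)\le M+cT_*-ct\) for all \(t\ge 0\), and sandwich with \eqref{e:keyprop1}. This buys you several things: Lyapunov stability and convergence fall out of the same exponential estimate, the Lipschitz hypothesis is never used beyond well-posedness of solutions, and the conclusion is directly quantitative --- asymptotically exponential decay of \(t\mapsto V_{\sigma(t)}(x(t))\), which the paper only notes afterwards in Remark \ref{r:expo}. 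What the paper's longer route makes explicit, and yours leaves implicit, is the precise uniformity discussion (uniform over initial conditions, not over \(\sigma\)); but your \(\beta_\sigma\), through its dependence on \(M\), \(c\), \(T_*\) and hence on \(\sigma\), encodes exactly the same caveat that the paper attaches to Theorem \ref{t:asym_condn} and Theorem \ref{t:mainres1}.
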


	The condition \eqref{e:maincondn2} determines the asymptotic nature of the function
	\begin{multline*}
		t\mapsto \nu(t)\sum_{(k,\ell)\in E(\P)}(\ln\mu_{k\ell})\rho_{k\ell}(t) \\
			- \sum_{j\in\PS}\abs{\lambda_{j}}\eta_{j}(t) + \sum_{k\in\PU}\abs{\lambda_{k}}\eta_{k}(t).
	\end{multline*}
	The first term in the expression of the preceding function includes the switching frequency and the transition frequency between subsystems, while the last two terms involve the fractions of activation of the subsystems. As in the case of Theorem \ref{t:asym_condn}, Theorem \ref{t:mainres1} also does not guarantee uniform stability in the sense of \cite[\S 2]{Liberzon}. 
	
	\begin{rem}
	\label{rem:compa_contra}
	\rm{
	The motivation behind the new result Theorem \ref{t:mainres1} is the purpose of identifying an umbrella framework for all classes of switching signals described in \S\ref{ss:the_res}. Although both Theorems \ref{t:asym_condn} and \ref{t:mainres1} deal solely with the asymptotic behaviour of the switching signals, the switching signals in Theorem \ref{t:asym_condn} afford a crisper characterization in terms of the properties of the switching signals in comparison to Theorem \ref{t:mainres1}. Indeed, in Theorem \ref{t:asym_condn} we have explicitly the asymptotic behaviour of various properties of the switching signals, viz., the switching frequency, frequency of admissible transitions, and the switching destinations. Clearly, by the properties of $\varliminf$ and $\varlimsup$ \cite[\S 0.2]{LojaReal}, the left-hand side of \eqref{e:maincondn2} is bounded above by
	\[
		\begin{aligned}
				& \varlimsup_{t\to+\infty}\nu(t)\sum_{(k,\ell)\in E(\P)}(\ln\mu_{k\ell})\varlimsup_{t\to+\infty}\rho_{k\ell}(t)\\
				& \qquad - \sum_{j\in\PS}\abs{\lambda_{j}}\varliminf_{t\to+\infty}\eta_{j}(t) + \sum_{k\in\PU}\abs{\lambda_{k}}\varlimsup_{t\to+\infty}\eta_{k}(t).
			\end{aligned}
	\]
	However, Theorem \ref{t:mainres1} is more general in the sense that it unifies all the existing characterizations of stabilizing switching signals that deal with both point-wise and asymptotic properties of the signals. 
This is the content of our next result. Interestingly enough, Theorem \ref{t:asym_condn} does not supply the unifying umbrella in this context precisely due to the ``crisper'' characterization described above; see our proof of Theorem \ref{t:mainres2} for a technical discussion. }
	\end{rem}

	\begin{theorem}
	\label{t:mainres2}
		Consider the family of systems \eqref{e:family}. Suppose that Assumptions \ref{assump:key} and \ref{assump:mu} hold. Then Theorem \ref{t:mainres1} unifies Theorems \ref{t:avg_dw_time}--{\ref{t:asym_condn}}.
	\end{theorem}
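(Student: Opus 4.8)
The plan is to establish Theorem \ref{t:mainres2} by verifying, for each of Theorems \ref{t:avg_dw_time}--\ref{t:asym_condn}, that its hypotheses force the single inequality \eqref{e:maincondn2}; the GAS conclusion then follows directly from Theorem \ref{t:mainres1}, in accordance with the meaning of ``unify'' fixed in \S\ref{ss:our_obj}. The first move is to rewrite the quantity inside the \(\varlimsup\) of \eqref{e:maincondn2} in unnormalized form: since \(\nu(t)\rho_{k\ell}(t) = \mathrm{N}_{k\ell}(0,t)/t\) and \(\eta_{j}(t) = \Tj(0,t)/t\), the condition \eqref{e:maincondn2} is equivalent to
\[
\varlimsup_{t\to+\infty}\frac{1}{t}\Biggl(\sum_{(k,\ell)\in E(\P)}(\ln\mu_{k\ell})\mathrm{N}_{k\ell}(0,t) - \sum_{j\in\PS}\abs{\lambda_{j}}\Tj(0,t) + \sum_{k\in\PU}\abs{\lambda_{k}}\Tk(0,t)\Biggr) < 0.
\]
For the three point-wise results I would bound the bracketed quantity pointwise by an explicit affine-in-\(t\) expression coming from the relevant counting hypothesis, divide by \(t\), and pass to the \(\varlimsup\), where the \(O(1/t)\) chatter-bound terms vanish.

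For Theorem \ref{t:avg_dw_time} (where \(\PU=\emptyset\) and \(\mu\), \(\lambda_{s}\) are uniform) I use \(\sum_{(k,\ell)}\mathrm{N}_{k\ell}(0,t)=\mathrm{N}(0,t)\) and \(\sum_{j\in\PS}\Tj(0,t)=t\) to reduce the bracket to \((\ln\mu)\,\mathrm{N}(0,t)/t-\lambda_{s}\); the average dwell time bound gives \(\varlimsup_{t}\mathrm{N}(0,t)/t\le 1/\tau_{a}\), so the expression is at most \(\ln\mu/\tau_{a}-\lambda_{s}<0\) by \eqref{e:ADT_condn}. For Theorem \ref{t:mdavg_dw_time} the hypothesis \(\mu_{ij}=\mu_{j}\) lets me regroup \(\sum_{(k,\ell)}(\ln\mu_{k\ell})\mathrm{N}_{k\ell}(0,t)=\sum_{j\in\P}(\ln\mu_{j})\mathrm{N}_{j}(0,t)\), up to a bounded correction accounting for the single initial activation of each mode; the mode-dependent bound \(\mathrm{N}_{j}(0,t)\le\mathrm{N}_{0}^{j}+\Tj(0,t)/\tau_{a}^{j}\) then yields \(\sum_{j\in\P}(\ln\mu_{j}/\tau_{a}^{j}-\lambda_{j})\eta_{j}(t)+O(1/t)\), a convex combination since \(\sum_{j\in\P}\eta_{j}(t)=1\); as every coefficient is negative by \eqref{e:mdadtcondn1}, the \(\varlimsup\) is at most \(\max_{j}(\ln\mu_{j}/\tau_{a}^{j}-\lambda_{j})<0\). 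For Theorem \ref{t:avg_dw_time+u} I combine the average dwell time bound on \(\mathrm{N}(0,t)\) with \(\TS(0,t)=t-\TU(0,t)\) and the point-wise constraint \eqref{e:un_durn}, \(\TU(0,t)\le\mathrm{T}_{0}+\rho t\), to reach \(\ln\mu/\tau_{a}-\lambda_{s}(1-\rho)+\lambda_{u}\rho\), which is negative precisely by \eqref{e:ADT+u} together with \(\rho<\lambda_{s}/(\lambda_{s}+\lambda_{u})\) (the latter guaranteeing the denominator in \eqref{e:ADT+u} is positive).

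The case of Theorem \ref{t:asym_condn} is handled differently: instead of a pointwise bound I invoke the subadditivity estimate already recorded in Remark \ref{rem:compa_contra}, namely that the left-hand side of \eqref{e:maincondn2} is dominated by the expression obtained from distributing the single joint \(\varlimsup\) into separate extremal limits (\(\varlimsup\) on the transition products and on the unstable term, \(\varliminf\) on the stable term). That dominating expression is exactly the difference of the two sides of \eqref{e:asympcondn2}, hence strictly negative under the hypothesis of Theorem \ref{t:asym_condn}. I expect the main difficulty to be twofold. The routine subtlety is keeping the \(\varliminf\) versus \(\varlimsup\) directions consistent, since the stable modes enter with a negative sign and therefore demand a \(\varliminf\) lower bound. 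The conceptual point, which is the real crux, is explaining why Theorem \ref{t:asym_condn} cannot itself serve as the umbrella: splitting the single \(\varlimsup\) of \eqref{e:maincondn2} into separate extremal limits only enlarges the bound, so \eqref{e:asympcondn2} is strictly stronger than \eqref{e:maincondn2}, and the example in the remark preceding \S\ref{ss:our_obj} exhibits a signal satisfying Theorem \ref{t:avg_dw_time+u} that violates \eqref{e:asympcondn2} yet still meets \eqref{e:maincondn2}. This is exactly the ``crisper characterization'' obstruction flagged in Remark \ref{rem:compa_contra}, and it is what singles out the joint single-\(\varlimsup\) formulation \eqref{e:maincondn2} as the correct unifying condition.
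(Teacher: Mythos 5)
Your proposal is correct, and at the top level it follows the same route as the paper: reduce the claim of Theorem \ref{t:mainres2} to showing that the hypotheses of each prior theorem force \eqref{e:maincondn2}, invoke Theorem \ref{t:mainres1}, and explain that Theorem \ref{t:asym_condn} cannot itself be the umbrella because the subadditivity properties of \(\varlimsup\)/\(\varliminf\) only go one way --- that obstruction, and the numerical example, are exactly what the paper records. Two local differences are worth noting. First, the paper does not verify \eqref{e:maincondn2} for Theorem \ref{t:avg_dw_time} directly; it dispatches it by observing that Theorem \ref{t:avg_dw_time} is a special case of Theorem \ref{t:mdavg_dw_time} (with \(\lambda_s = \min_{j}\abs{\lambda_j}\), \(\mu = \max_j \mu_j\), \(\mathrm{N}_0 = \sum_j \mathrm{N}_0^j\), \(1/\tau_a = \sum_j 1/\tau_a^j\)) and of Theorem \ref{t:avg_dw_time+u} (with \(\P_{U} = \emptyset\)); your direct argument via \((\ln\mu)\mathrm{N}(0,t)/t - \lambda_s\) is equally valid. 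Second, and more substantively, your finishing step for the mode-dependent case is genuinely different from (and tighter than) the paper's. The paper introduces slacks \(\varepsilon_j > 0\) from \eqref{e:mdadtcondn1}, splits the \(\varlimsup\) mode by mode, and concludes with the bound \(-\sum_{j\in\P}\varepsilon_j(\ln\mu_j)\varliminf_{t\to+\infty}\Tj(0,t)/t < 0\); strict negativity of that expression actually requires some mode to satisfy \(\varliminf_{t\to+\infty}\Tj(0,t)/t > 0\), which the mode-dependent average dwell time hypothesis does not guarantee (two modes activated alternately over super-exponentially growing stretches have \(\varliminf_{t\to+\infty}\Tj(0,t)/t = 0\) for every \(j\) while satisfying any MDADT constraint). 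Your convex-combination argument --- bounding the expression by \(\sum_{j\in\P}\bigl(\ln\mu_j/\tau_a^j - \lambda_j\bigr)\eta_j(t) + O(1/t)\) and using \(\eta_j(t)\geq 0\), \(\sum_{j\in\P}\eta_j(t) = 1\) to get the uniform bound \(\max_{j\in\P}\bigl(\ln\mu_j/\tau_a^j - \lambda_j\bigr) < 0\) --- never splits a limit over modes and so is immune to this degeneracy; it is the cleaner way to close case (I), and your bookkeeping of the off-by-one between transition counts and activation counts (the ``single initial activation'' correction) is likewise more careful than the paper's, which silently treats them as equal. Your handling of Theorems \ref{t:avg_dw_time+u} and \ref{t:asym_condn} matches the paper's up to reorganization: you use pointwise affine bounds divided by \(t\), the paper uses subadditivity of \(\varlimsup\) directly; the resulting estimates \(\ln\mu/\tau_a - \lambda_s(1-\rho) + \lambda_u\rho < 0\) and the domination of \eqref{e:maincondn2} by the two sides of \eqref{e:asympcondn2} are identical.
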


	{Recall that Theorems \ref{t:avg_dw_time}, \ref{t:mdavg_dw_time}, and \ref{t:avg_dw_time+u} provide point-wise characteristics of stabilizing switching signals, while Theorem \ref{t:asym_condn} characterizes stabilizing switching signals on the basis of their asymptotic properties.} In the light of Theorem \ref{t:mainres2}, it is clear that Theorem \ref{t:mainres1} unites all the above Theorems in terms of the asymptotic properties of the corresponding classes of switching signals.

	We provide detailed proofs of Theorems \ref{t:mainres1} and \ref{t:mainres2} in \S\ref{s:proofs}.
	
\section{Conclusion}
\label{s:concln}
	In this article we studied classes of stabilizing switching signals for continuous-time switched systems. Given a family of systems such that not all systems in the family are asymptotically stable, we proposed a new and general class of switching signals that recovers all existing results derived in the setting of multiple Lyapunov-like functions. Under standard assumptions, Theorem \ref{t:mainres1} extends to the discrete-time setting with minor modifications in the weights associated to the fraction of activation of subsystems $j\in\P$ until time $t > 0$ expressed by $\eta_{j}(t)$. Consequently, this extension recovers the discrete-time versions of the point-wise and asymptotic stability conditions presented in this article. We conjecture that the asymptotic stability condition for discrete-time switched systems presented in \cite{knc_hscc14} also follows from a discrete-time counterpart of Theorem \ref{t:mainres1}.

\section{Proofs}
\label{s:proofs}
	\begin{proof}[Proof of Theorem \ref{t:mainres1}]
		{Fix $t > 0$. Recall that $0=:\tau_{0}<\tau_{1}<\ldots<\tau_{\Ntsigma}$ are the switching instants before (and including) $t$. \ak{By} a straightforward iteration involving \eqref{e:Lyapprop} and \eqref{e:muijprop}, we obtain
		\begin{align}
		\label{e:pft6step1}
			V_{\sigma(t)}(x(t)) \leq \exp(\psi(t))V_{\sigma(0)}(x_{0})
		\end{align}
		with
		\begin{align}
		\label{e:psidefn}
			\psi(t) &:= \ln\Biggl(\prod_{i=0}^{\Ntsigma-1}\mu_{\sigma(\tau_{i})\sigma(\tau_{i+1})}\Biggr) - \sum_{\substack{i=0\\\tau_{\Ntsigma+1}:=t}}^{\Ntsigma}\lambda_{\sigma(\tau_{i})}S_{i}.
		\end{align}
		
		We have
		\begin{align}
			\ln\Biggl(\prod_{i=0}^{\Ntsigma-1}\mu_{\sigma(\tau_{i})\sigma(\tau_{i+1})}\Biggr) &= \sum_{i=0}^{\Ntsigma-1}\ln\mu_{\sigma(\tau_{i})\sigma(\tau_{i+1})} \nonumber\\
			&= \sum_{k\in\P}\sum_{i=0}^{\Ntsigma-1}\sum_{\substack{k\to\ell:\\\ell\in\P,\\k\neq\ell,\\\sigma(\tau_{i})=k,\\\sigma(\tau_{i+1})=\ell}}\ln\mu_{k\ell} \nonumber\\
			&= \sum_{(k,\ell)\in E(\P)}(\ln\mu_{k\ell})\mathrm{N}_{k\ell}(0,t) \nonumber\\
			\label{e:pft6step2} &\hspace*{-1cm}= \Ntsigma\sum_{(k,\ell)\in E(\P)}(\ln\mu_{k\ell})\rho_{k\ell}(t),
		\end{align}
		where $\rho_{k\ell}(t)$ is as defined in \eqref{e:tranfreq}.\\
		Also, $\displaystyle{-\sum_{\substack{i=0\\\tau_{\Ntsigma+1}:=t}}^{\Ntsigma}\lambda_{\sigma(\tau_{i})}S_{i} = -\sum_{\substack{i=0\\\tau_{\Ntsigma+1}:=t}}^{\Ntsigma}\sum_{j\in\P}{1}_{\{j\}}(\sigma(\tau_{i}))\lambda_{j}S_{i}}$.
		Separating out the asymptotically stable and unstable subsystems in the family \eqref{e:family}, we have that the right-hand side of the above equality is $\displaystyle{-\sum_{j\in\P_{S}}\lambda_{j}\Tj(0,t) - \sum_{k\in\P_{U}}\lambda_{k}\Tk(0,t)}$.
		By the properties of $\lambda_{j}$, the above expression can be rewritten as
		\begin{align}
		\label{e:pft6step3}
			-\sum_{j\in\P_{S}}\abs{\lambda_{j}}\Tj(0,t) + \sum_{k\in\P_{U}}\abs{\lambda_{k}}\Tk(0,t).
		\end{align}
		Replacing \eqref{e:pft6step2} and \eqref{e:pft6step3} in \eqref{e:psidefn}, we obtain
		\begin{align*}
			\psi(t) = \Ntsigma&\sum_{(k,\ell)\in E(\P)}(\ln\mu_{k\ell})\rho_{k\ell}(t) - \sum_{j\in\P_{S}}\abs{\lambda_{j}}\Tj(0,t)\\
			&+ \sum_{k\in\P_{U}}\abs{\lambda_{k}}\Tk(0,t). 
		\end{align*}
		For $t > 0$, the above expression can be written as
		\begin{align}
			\psi(t) &= t\Biggl(\frac{\Ntsigma}{t}\sum_{(k,\ell)\in E(\P)}(\ln\mu_{k\ell})\rho_{k\ell}(t) - \sum_{j\in\P_{S}}\abs{\lambda_{j}}\frac{\Tj(0,t)}{t}\nonumber\\
			&\hspace*{1cm}+ \sum_{k\in\P_{U}}\abs{\lambda_{k}}\frac{\Tk(0,t)}{t} 
			\Biggr) \nonumber\\
			\label{e:pft6step4}&= t\Biggl(\nu(t)\sum_{(k,\ell)\in E(\P)}(\ln\mu_{k\ell})\rho_{k\ell}(t) - \sum_{j\in\P_{S}}\abs{\lambda_{j}}\eta_{j}(t)\nonumber\\
			&\hspace*{1cm}+ \sum_{k\in\P_{U}}\abs{\lambda_{k}}\eta_{k}(t) 
			\Biggr),
		\end{align}
		where $\nu(t)$ and $\eta_{j}(t)$ are as defined in \eqref{e:swfreq} and \eqref{e:actifrac}, respectively.
		
		Now, by \eqref{e:keyprop1} and \eqref{e:pft6step1}, we obtain
		\begin{align}
		\label{e:pft6step5}
			\underline{\alpha}(\norm{x(t)}) \leq \exp(\psi(t))\overline{\alpha}(\norm{x_{0}}).
		\end{align}
		We verify GAS of the switched system \eqref{e:swsys} in two steps:
		\begin{enumerate}[leftmargin = *]
			\item i) we find conditions such that
			\begin{align}
			\label{e:toverify1}
				\lim_{t\to+\infty}\exp(\psi(t)) = 0,
			\end{align}
			ii) convergence is uniform for initial conditions $\tilde{x_{0}}$ satisfying $\norm{\tilde{x_{0}}}\leq\norm{x_{0}}$.
			\item we verify Lyapunov stability of \eqref{e:swsys} under any switching signal $\sigma$ that satisfies 1 i)--ii), i.e., it ensures uniform global asymptotic convergence of \eqref{e:swsys}.
		\end{enumerate}
		
		We begin with (1)i). Clearly, a sufficient condition for \eqref{e:toverify1} is that
		\begin{equation}
		\label{e:verd1}
		\begin{aligned}
			& \varlimsup_{t\to+\infty}\Biggl(\frac{\Ntsigma}{t}\sum_{(k,\ell)\in E(\P)}(\ln\mu_{k\ell})\rho_{k\ell}(t) - \sum_{j\in\P_{S}}\abs{\lambda_{j}}\eta_{j}(t)\\
			& \qquad + \sum_{k\in\P_{U}}\abs{\lambda_{k}}\eta_{k}(t) 
			\Biggr) < 0.
		\end{aligned}
		\end{equation}}
		
		We now move on to verify (1)ii). In view of \eqref{e:pft6step5}, we have
		\begin{align}
		\label{e:verd3}
			\norm{x(t)} \leq \underline{\alpha}^{-1}\biggl(\overline{\alpha}(\norm{x_{0}})\exp(\psi(t))\ak{\biggr)}\:\:\text{for all}\:\:t\geq 0.
		\end{align}
		Since the initial condition $x_{0}$ is decoupled from $\psi$ on the right-hand side of \eqref{e:verd3} and $\psi$ depends on $\sigma$, then for a fixed $\sigma$, if $\norm{x(t)}<\varepsilon$ for all $t > T(\norm{x_{0}},\varepsilon)$ for some pre-assigned $\varepsilon > 0$, then the solution $(\tilde{x}(t))_{t\geq 0}$ to \eqref{e:swsys} corresponding to an initial condition $\tilde{x_{0}}$ such that $\norm{\tilde{x_{0}}}\leq\norm{x_{0}}$ satisfies $\norm{\tilde{x}(t)} < \varepsilon$ for all $t > T(\norm{x_{0}},\varepsilon)$. Consequently, uniform global asymptotic convergence follows. Note that the uniformity here is over the initial condition \(x_0\), and not the set of switching signals \(\sigma\).
		
		It remains to verify (2). To this end, we need to show that for all $\varepsilon > 0$ there exists $\delta_{\varepsilon} > 0$ such that $\norm{x_{0}} < \delta_{\varepsilon}$ implies $\norm{x(t)} < \varepsilon$ for all $t\geq 0$. Fix $\varepsilon > 0$ and $\sigma\in\mathcal{S}$ such that $\sigma$ ensures uniform global asymptotic convergence of \eqref{e:swsys}. In other words, there exists $T(1,\varepsilon) > 0$ such that $\norm{x(t)} < \varepsilon$ for all $t > T(1,\varepsilon)$ whenever $\norm{x_{0}} < 1$.
		
		Let the family \eqref{e:family} be globally Lipschitz, and $L$ be the uniform Lipschitz constant over $\P$. It follows that with $\sigma\in\mathcal{S}$, $\norm{x(t)}\leq\exp(Lt)\norm{x_{0}}$ for all $t\geq 0$. Let $\delta' = \varepsilon\exp(-LT(1,\varepsilon))$. From the above inequality, it is evident that $\norm{x(t)}<\varepsilon$ for all $t\in[0,T(1,\varepsilon)]$ whenever $\norm{x_{0}} < \delta'$ with $\sigma\in\mathcal{S}$. To specialize to a $\sigma$ that ensures uniform global asymptotic convergence of \eqref{e:swsys}, we select $\delta = \min\{1,\delta'\}$, and Lyapunov stability of \eqref{e:swsys} follows at once. Now, if the family \eqref{e:family} is locally Lipschitz, we employ the following set of arguments to verify (2). Let $\varphi:[0,T(1,\varepsilon)]\:\lra\R$ be a function connecting $(\tau_{0},V_{\sigma(0)}(x_{0}))$, $(\tau_{i},\max\{y_{i},\tilde{y}_{i}\})$, $(T(1,\varepsilon),V_{\sigma(T(1,\varepsilon))}(x(T(1,\varepsilon))))$, $i = 1,2,\cdots,N$, where $N =$ number of switches before $T(1,\varepsilon), y_{i} = V_{\sigma(\tau_{i-1})}(x(\tau_{i}))$, and $\tilde{y}_{i} =$ $V_{\sigma(\tau_{i})}(x(\tau_{i}))$, with straight line segments. By construction, $\varphi$ is an upper envelope of $t\:\longmapsto V_{\sigma(t)}(x(t))$ on $[0,T(1,\varepsilon)]$, and is continuous. By continuity of $\varphi$ we have $\hat\varphi \Let \max_{t\in[0, T(1, \varepsilon)]}\varphi(t) < +\infty$. Also, due to \eqref{e:keyprop1}, $\hat\varphi\rightarrow 0$ as $\norm{x_{0}}\rightarrow 0$. It follows that there exists $\delta = \delta(\varepsilon) > 0$ such that whenever $\norm{x_{0}} < \delta(\varepsilon)$, we have $\hat\varphi < \varepsilon$.
		
		Our proof is now complete.
	\end{proof}	
	
	\ak{\begin{rem}
	\label{rem:addnl_condn}
		\ak{Observe that going one step beyond \eqref{e:verd1} and applying the properties \(\displaystyle{\varlimsup(\varphi_{1}+\varphi_{2})\leq\varlimsup\varphi_{1}+\varlimsup\varphi_{2}}\), \(\displaystyle{\varliminf(\varphi_{1}+\varphi_{2})\geq\varliminf\varphi_{1}+\varliminf\varphi_{2}}\), one obtains \eqref{e:asympcondn2}.} 
	\end{rem}}

	\begin{proof}[Proof of Theorem \ref{t:mainres2}]
		{Theorem \ref{t:avg_dw_time} follows as a special case of Theorem \ref{t:mdavg_dw_time} with $\lambda_{s} = \displaystyle{\min_{j\in\P_{S}}\abs{\lambda_{j}}}$, $\displaystyle{\mu = \max_{j\in\P}\mu_{j}}$, $\displaystyle{\mathrm{N}_{0} = \sum_{j\in\P}\mathrm{N}_{0_{j}}}$, $\displaystyle{\frac{1}{\tau_{a}} = \sum_{j\in\P}\frac{1}{\tau_{a}^{j}}}$. Moreover, Theorem \ref{t:avg_dw_time} follows as a special case of Theorem \ref{t:avg_dw_time+u} when $\P_{U}= \emptyset$.}

		Therefore, in order to show that Theorem \ref{t:mainres1} unifies Theorems \ref{t:avg_dw_time}, \ref{t:mdavg_dw_time}, \ref{t:avg_dw_time+u} and \ref{t:asym_condn} under an umbrella framework, it suffices to show the following: if a switching signal $\sigma$ satisfies the conditions in Theorem \ref{t:mdavg_dw_time} (resp. Theorems \ref{t:avg_dw_time+u}, and \ref{t:asym_condn}), then the conditions in Theorem \ref{t:mainres1} follow, and by the assertion of Theorem \ref{t:mainres1}, the switched system \eqref{e:swsys} is GAS.
		
		(I) {We first show that if a \(\sigma\) satisfies the conditions in Theorem \ref{t:mdavg_dw_time}, then the conditions in Theorem \ref{t:mainres1} follow.}
		
		Assume that a switching signal $\sigma\in\mathcal{S}$ satisfies mode-dependent average dwell time $\tau_{a}^{j}$ such that \eqref{e:mdadtcondn1} holds. 
		{It suffices to show that the above \(\sigma\) satisfies \eqref{e:maincondn2}.}
		

		
		We have for any $t > 0$,
		\begin{align*}
			&\nu(t)\sum_{(i,j)\in E(\P)}(\ln\mu_{ij})\rho_{ij}(t) - \sum_{j\in\P_{S}}\abs{\lambda_{j}}\eta_{j}(t)\nonumber\\ &= \sum_{j\in\P}(\ln\mu_{j})\frac{\mathrm{N}_{{j}}(0,t)}{t} - \sum_{j\in\P}\lambda_{j}\eta_{j}(t)
		\end{align*}
		By definition of mode-dependent average dwell time, the right-hand side of the above quantity is bounded above by
		\[
			\sum_{j\in\P}(\ln\mu_{j})\frac{\mathrm{N}_{0}^{j}}{t} + \sum_{j\in\P}\frac{(\ln\mu_{j})}{\tau_{a}^{j}}\frac{\Tj(0,t)}{t} - \sum_{j\in\P}\lambda_{j}\frac{\Tj(0,t)}{t}.
		\]
		In view of \eqref{e:mdadtcondn1}, the above expression is at most equal to
		\begin{align*}
			&\sum_{j\in\P}(\ln\mu_{j})\frac{\mathrm{N}_{0}^{j}}{t} + \sum_{j\in\P}(\ln\mu_{j})\frac{\Tj(0,t)}{t}\Biggl(\frac{\lambda_{j}}{\ln\mu_{j}}-\varepsilon_{j}\Biggr) \\&- \sum_{j\in\P}\lambda_{j}\frac{\Tj(0,t)}{t}\:\text{with}\:\varepsilon_{j} > 0\:\text{for all}\:j\in\P\\
			&= \sum_{j\in\P}(\ln\mu_{j})\frac{\mathrm{N}_{0}^{j}}{t} - \sum_{j\in\P}\varepsilon_{j}(\ln\mu_{j})\frac{\Tj(0,t)}{t}.
		\end{align*}
		Therefore,
		\begin{align*}
			&\varlimsup_{t\to+\infty}\Biggl(\nu(t)\sum_{(i,j)\in E(\P)}(\ln\mu_{ij})\rho_{ij}(t) - \sum_{j\in\P_{S}}\abs{\lambda_{j}}\eta_{j}(t)\Biggr)\\ &\leq \varlimsup_{t\to+\infty}\Biggl(\sum_{j\in\P}(\ln\mu_{j})\frac{\mathrm{N}_{0}^{j}}{t} - \sum_{j\in\P}\varepsilon_{j}(\ln\mu_{j})\frac{\Tj(0,t)}{t}\Biggr)\\
			&\leq\varlimsup_{t\to+\infty}\Biggl(\sum_{j\in\P}(\ln\mu_{j})\frac{\mathrm{N}_{0}^{j}}{t}\Biggr) - \varliminf_{t\to+\infty}\Biggl(\varepsilon_{j}(\ln\mu_{j})\frac{\Tj(0,t)}{t}\Biggr)\\
			&\leq - \sum_{j\in\P}\varepsilon_{j}(\ln\mu_{j})\varliminf_{t\to+\infty}\frac{\Tj(0,t)}{t} < 0\:\text{since}\:\mu_{j} > 1\:\text{for all}\:j\in\P.
		\end{align*}
		Consequently, \eqref{e:maincondn2} holds, and by the assertion of Theorem \ref{t:mainres1}, the switched system \eqref{e:swsys} is GAS. 

		(II) {We now show that if a \(\sigma\) satisfies the conditions in Theorem \ref{t:avg_dw_time+u}, then the conditions in Theorem \ref{t:mainres1} follow.}
		
		Assume that a switching signal $\sigma\in\mathcal{S}$ satisfies average dwell time $\tau_{a}$ such that \eqref{e:un_durn} and \eqref{e:ADT+u} hold.
		
		
		It suffices to show that the \(\sigma\) under consideration satisfies \eqref{e:maincondn2}. We have
		\begin{align}
		\label{e:pft2step2}
			&\varlimsup_{t\to+\infty}\Biggl(\nu(t)\sum_{(k,\ell)\in E(\P)}(\ln\mu_{k\ell})\rho_{k\ell}(t) - \sum_{j\in\PS}\abs{\lambda_{j}}\eta_{j}(t)\nonumber\\ &\hspace*{1cm}+ \sum_{k\in\PU}\abs{\lambda_{k}}\eta_{k}(t)\Biggr)\nonumber\\
			&\leq \varlimsup_{t\to+\infty}\nu(t)\varlimsup_{t\to+\infty} \Biggl(\sum_{(k,\ell)\in E(\P)}(\ln\mu_{k\ell})\rho_{k\ell}(t)\Biggr)\nonumber\\&\hspace*{1cm}+ \varlimsup_{t\to+\infty}\Biggl(- \sum_{j\in\PS}\abs{\lambda_{j}}\eta_{j}(t) + \sum_{k\in\PU}\abs{\lambda_{k}}\eta_{k}(t)\Biggr).
		\end{align}
		Firstly, since $\sigma$ satisfies average dwell time $\tau_{a}$, we have that
		\begin{align}
		\label{e:pft2step3}
			\varlimsup_{t\to+\infty}\nu(t) \leq \varlimsup_{t\to+\infty}\Biggl(\frac{\mathrm{N}_{0}}{t}+\frac{1}{\tau_{a}}\Biggr) \leq \frac{1}{\tau_{a}},
		\end{align}
		and in view of $\displaystyle{\Ntsigma = \sum_{(k,\ell)\in E(\P)}\mathrm{N}_{k\ell}(0,t)}$, we have
		\begin{align}
		\label{e:pft2step4}
			\varlimsup_{t\to+\infty}\Biggl(\sum_{(k,\ell)\in E(\P)}(\ln\mu_{k\ell})\rho_{k\ell}(t)\Biggr) = \ln\mu.
		\end{align}
		Secondly, applying $\displaystyle{\TS(0,t) = \sum_{j\in\PS}\Tj(0,t)}$, \\$\displaystyle{\TU(0,t) = \sum_{k\in\PU}\Tk(0,t)}$, and $t = \TS(0,t) + \TU(0,t)$, we get
		\begin{align*}
			&\varlimsup_{t\to+\infty}\Biggl(- \sum_{j\in\PS}\abs{\lambda_{j}}\eta_{j}(t) + \sum_{k\in\PU}\abs{\lambda_{k}}\eta_{k}(t)\Biggr) \\&= \varlimsup_{t\to+\infty}\Biggl(-\lambda_{s}\frac{(t-\TU(0,t))}{t} + \lambda_{u}\frac{\TU(0,t)}{t}\Biggr)\\
			&\leq -\lambda_{s} + (\lambda_{s}+\lambda_{u})\varlimsup_{t\to+\infty}\frac{\TU(0,t)}{t}.
		\end{align*}
		In view of \eqref{e:un_durn}, the above expression is at most equal to
		\begin{align}
		\label{e:pft2step5}
			-\lambda_{s} + (\lambda_{s}+\lambda_{u})\varlimsup_{t\to+\infty}\Biggl(\frac{\mathrm{T}_{0}}{t} + \rho\Biggr) \leq -\lambda_{s} + (\lambda_{s}+\lambda_{u})\rho.
		\end{align}
		Replacing \eqref{e:pft2step3}--\eqref{e:pft2step5} in \eqref{e:pft2step2}, we obtain
		\begin{align*}
			&\varlimsup_{t\to+\infty}\Biggl(\nu(t)\sum_{(k,\ell)\in E(\P)}(\ln\mu_{k\ell})\rho_{k\ell}(t) - \sum_{j\in\PS}\abs{\lambda_{j}}\eta_{j}(t)\nonumber\\&\hspace*{1cm}+ \sum_{k\in\PU}\abs{\lambda_{k}}\eta_{k}(t)\Biggr)\\
			&\leq\frac{1}{\tau_{a}}(\ln\mu)-\lambda_{s}(1-\rho)+\lambda_{u}\rho.
		\end{align*}
		In view of \eqref{e:ADT+u}, the above quantity is bounded above by $-\varepsilon(\ln\mu)$ for some $\varepsilon > 0$. Since $\mu > 1$, \(-\varepsilon (\ln\mu)\) is strictly smaller than \(0\). Consequently, \eqref{e:maincondn2} holds, and by the assertion of Theorem \ref{t:mainres1}, we conclude that the switched system \eqref{e:swsys} is GAS. 
		
			Observe that the set of arguments in (I) and (II) 
			do not follow from \eqref{e:asympcondn2} because of the following properties of $\varliminf$ and $\varlimsup$ \cite[\S0.2]{LojaReal}:
			\begin{align*}
				\varlimsup(\varphi_{1}+\varphi_{2})\leq\varlimsup\varphi_{1}+\varlimsup\varphi_{2}\\
				\varliminf(\varphi_{1}+\varphi_{2})\geq\varliminf\varphi_{1}+\varliminf\varphi_{2}
			\end{align*}
			that hold whenever the right-hand sides are not of the form $\mp\infty\pm\infty$. Consequently, Theorem \ref{t:asym_condn} does not offer an umbrella framework for Theorems \ref{t:avg_dw_time}-\ref{t:avg_dw_time+u}; for that we need Theorem \ref{t:mainres1}.
		
		(III) {We finally show that if a \(\sigma\) satisfies the conditions in Theorem \ref{t:asym_condn}, then the conditions in Theorem \ref{t:mainres1} follow.}
		
		{Assume that a switching signal $\sigma$ satisfies \eqref{e:asympcondn2}. 
		We demonstrate that \eqref{e:asympcondn2} implies \eqref{e:maincondn2}.}
		\begin{align}
		\label{e:pft2step6}
			&\varlimsup_{t\to+\infty}\Biggl(\nu(t)\sum_{(k,\ell)\in E(\P)}(\ln\mu_{k\ell})\rho_{k\ell}(t) - \sum_{j\in\PS}\abs{\lambda_{j}}\eta_{j}(t) \nonumber\\&\hspace*{1cm}+ \sum_{k\in\PU}\abs{\lambda_{k}}\eta_{k}(t)\Biggr)\nonumber\\
			&\leq \varlimsup_{t\to+\infty}\nu(t)\varlimsup_{t\to+\infty} \Biggl(\sum_{(k,\ell)\in E(\P)}(\ln\mu_{k\ell})\rho_{k\ell}(t)\Biggr)\nonumber\\&\hspace*{1cm} - \varliminf_{t\to+\infty}\Biggl(\sum_{j\in\PS}\abs{\lambda_{j}}\eta_{j}(t)\Biggr) + \varlimsup_{t\to+\infty}\Biggl(\sum_{k\in\PU}\abs{\lambda_{k}}\eta_{k}(t)\Biggr)\nonumber\\
			&\leq \varlimsup_{t\to+\infty}\nu(t)\sum_{(k,\ell)\in E(\P)}(\ln\mu_{k\ell})\varlimsup_{t\to+\infty}\rho_{k\ell}(t)\nonumber\\&\hspace*{1cm} - \sum_{j\in\PS}\abs{\lambda_{j}}\varliminf_{t\to+\infty}\eta_{j}(t) + \sum_{k\in\PU}\abs{\lambda_{k}}\varlimsup_{t\to+\infty}\eta_{k}(t).
		\end{align}
		In view of \eqref{e:asympcondn2}, the right-hand side of \label{e:pft2step6} is strictly less than 0. Hence, \eqref{e:maincondn2} follows and by the assertion of Theorem \ref{t:mainres1}, the switched system \eqref{e:swsys} is GAS under the switching signal $\sigma$ in discussion. To wit, Theorem \ref{t:avg_dw_time+u} follows from Theorem \ref{t:mainres1}.

		This completes our proof of Theorem \ref{t:mainres2}.
	\end{proof}

	\begin{rem}
	\label{r:expo}
		Notice that all the prior results that are unified in the framework of Theorem \ref{t:mainres1} relate to exponential convergence (a quantitative property) of the function \(t\mapsto V_{\sigma(t)}(x(t))\) in terms of the notation established above. Indeed, the left-hand side of {\eqref{e:maincondn2}} is at most equal to $-c$ for some scalar $c>0$, which in conjunction with \eqref{e:keyprop1} and \eqref{e:pft6step5} ensures asymptotically exponential convergence rate of the Lyapunov-like functions along system trajectories. In contrast, the recent work \cite{ghi}, geared towards input-to-state stability (ISS) of switched systems,\footnote{Recall that if the input is set to zero for all time, then the ISS property reduces to the GAS property.} characterizes stability in terms of certain class $\mathcal{FK}_{\infty}$ functions, and the only essential property imposed there is monotonicity (a purely qualitative property).\footnote{$\mathcal{FK}_{\infty} := \{\varrho:[0,+\infty[^{2}\to[0,+\infty[^{2}\:|\:\varrho\:\text{is continuous, and for every fixed first argument},\:\varrho\in\mathcal{K}_{\infty}\:\text{in the second argument}\}.$} Consequently, the conditions for stabilizing switching signals that can be obtained via \cite{ghi} will not fall out as a special case of Theorem \ref{t:mainres1} that guarantees exponential convergence. However, qualitative results can be specialized to give quantitative ones, e.g., in \cite{ghi} the authors showed that prior results based on ISS under average dwell time switching follow as non-trivial special cases of the results in \cite{ghi}.
	\end{rem}

%
%
\section*{Appendix}
	Fix \(t>0\). In \cite{abc} the quantity \(\eta_{j}(t)\) was defined on the interval \(]0,\tau_{\Ntsigma}]\) and \eqref{e:asympcondn2} was derived under the condition that 
	\begin{align}
	\label{e:app_condn1}
		\lim_{t\to+\infty}\frac{t-\tau_{\Ntsigma}}{t} = 0.
	\end{align}
	In other words, as \(t\to+\infty\) the duration of activation of any system in \eqref{e:family} is not comparable to \(t\). To arrive at \eqref{e:app_condn1}, the hypothesis
	\begin{align}
	\label{e:app_condn2}
		\varliminf_{t\to+\infty}\nu(t) > 0
	\end{align}
	was employed. However, the claim that \eqref{e:app_condn2} implies \eqref{e:app_condn1} was incorrect. Indeed, consider a switching signal \(\sigma\) of the following nature (by the phrase ``immediately after'' that appears below, we refer to a small interval of length \(\varepsilon > 0\)):
	\begin{itemize}[label = \(\circ\), leftmargin = *]
		\item immediately after \(t=0\), there is \(1\) switch and no further switches till \(t=1\)
		\item immediately after every \(t=2^n\), there is/are \(2^n\) switch(es) and no further switches till \(t=2^{n+1}\), \(n\in\N_{0}\).
	\end{itemize}
	Now, fix any \(n>0\) and \(2^n+\varepsilon<t\leq 2^{n+1}\), \(t\) large enough. We have that \(\displaystyle{1+\sum_{k=0}^{n}2^k = 2^{n+1}}\). Consequently, \(\ak{\displaystyle{\displaystyle{\varliminf_{t\to+\infty}}\nu(t)\geq 1}}\) but as \(t\to+\infty\), \(t-\tau_{\Ntsigma}\) is not negligible compared to \(t\). 
	
	\ak{A careful observation, however, reveals that the condition \eqref{e:app_condn1} may be weakened to}
	\begin{align}
	\label{e:app_condn3}
		\varlimsup_{t\to+\infty}\max_{\sigma(\tau_{\Ntsigma})\in\P_{U}}\frac{t-\tau_{\Ntsigma}}{t} = 0;
	\end{align}
	\eqref{e:app_condn3} requires that as \(t\to+\infty\), a switching signal does not \emph{dwell} on the unstable subsystems for time durations comparable to \(t\). In fact, it also provides a quantitative estimate that \(\displaystyle{\frac{t-\tau_{\Ntsigma}}{t}} = o(t)\) as \(t\to+\infty\) and \(\sigma(\tau_{\Ntsigma})\in\P_{U}\). Indeed, \emph{dwell}ing on an asymptotically stable subsystem for longer time durations (comparable to \(t\) as \(t\to+\infty\)) does not act \emph{against} our objective to stabilize \eqref{e:swsys}.  
	
	A further careful observation leads us to the fact that with the quantity \(\eta_{j}(t)\) defined on \(]0,t]\) in place of \(]0,\tau_{\Ntsigma}]\), only condition \eqref{e:asympcondn2} is sufficient to guarantee GAS of \eqref{e:swsys}, see Remark \ref{rem:addnl_condn}. Indeed, as \(t\to+\infty\) long durations of activation for asymptotically stable subsystems and short durations of activation for unstable subsystems are perfectly admissible so long as \eqref{e:asympcondn2} is satisfied. The condition \eqref{e:asympcondn2} includes \(\displaystyle{\varlimsup_{t\to+\infty}\nu(t)}\) and \(\displaystyle{\varliminf_{t\to+\infty}\nu(t)\leq\varlimsup_{t\to+\infty}\nu(t)}\). However, no lower bound on \(\displaystyle{\varliminf_{t\to+\infty}\nu(t)}\) is required for condition \eqref{e:asympcondn2} to hold.
\section*{Acknowledgements}
	We are grateful to the anonymous reviewers whose comments helped us to significantly improve the main results and the overall quality of the manuscript.



	\end{multicols}

\bigskip

\end{document}